\setlist{nolistsep}
\def\ps@pprintTitle{%
\let\@oddhead\@empty
\let\@evenhead\@empty
\def\@oddfoot{\centerline{\thepage}}%
\let\@evenfoot\@oddfoot}
\newcommand{\beq}{\begin{equation}}
\newcommand{\eeq}{\end{equation}}
\DeclareMathOperator{\sech}{sech}
\newcommand{\p}{\partial}
\newcommand{\emn}{\epsilon_{\mu\nu}}
\newcommand{\ex}{\bm{\hat{e}}_1}
\newcommand{\ey}{\bm{\hat{e}}_2}
\newcommand{\ez}{\bm{\hat{e}}_3}
\newcommand{\Eex}{E_{\rm ex}}
\newcommand{\Ea}{E_{\rm an}}
\newcommand{\Edm}{E_{\rm DM}}
\newcommand{\magn}{\bm{m}}
\newcommand{\DM}{D}
\newcommand{\Anisotropy}{K}
\newcommand{\dmscaled}{\epsilon}
\newcommand{\ldw}{\ell_{\rm w}}
\newcommand{\ldk}{\ell_{S}}
\newcommand{\T}{\Theta}
\renewcommand{\t}{\theta}
\newcommand{\hf}{\textstyle{\frac{1}{2}}}
\newcommand{\der}{\mathrm{d}}
\newcommand{\dr}{\mathrm{d}r}
\newcommand{\dT}{\mathrm{d}T}
\newcommand{\dtau}{\mathrm{d}\tau}
\newtheorem*{theorem*}{Theorem}
\newtheorem{theorem}{Theorem}
\begin{document}

\begin{frontmatter}

\title{Chiral skyrmions of large radius}
\author{Stavros Komineas}
\address{Department of Mathematics and Applied Mathematics, University of Crete, 70013 Heraklion, Crete, Greece}
\author{Christof Melcher}
\address{Department of Mathematics \& JARA Fundamentals of Future Information Technology, RWTH Aachen University, 52056 Aachen, Germany}
\author{Stephanos Venakides}
\address{Department of Mathematics, Duke University, Durham, NC, USA}
\date{\today}

\begin{abstract}
We study the structure of an axially symmetric magnetic skyrmion in a ferromagnet with the Dzyaloshinskii-Moriya interaction.
We examine the regime of large skyrmions and we identify rigorously the critical value of the dimensionless parameter at which the skyrmion radius diverges to infinity, while the skyrmion energy converges to zero. This critical value coincides with the expected transition point from the uniform phase, which accommodates the skyrmion as an excited state, to the helical phase, which has negative energy.
We give the profile field at the skyrmion core, its outer field, and the intermediate field at the skyrmion domain wall.
Moreover, we derive an explicit formula for the leading asymptotic behavior of the energy as well as the leading term and first asymptotic correction for the value of the critical parameter.
The key leading to the results is a parity theorem that utilizes exact formulae for the asymptotic behavior of the solutions of the static Landau-Lifshitz equation centered at the skyrmion domain wall. 
The skyrmion energy is shown to be an odd function of the radius and the dimensionless parameter to be an even function.
\end{abstract}

\begin{keyword}
Magnetic skyrmion \sep Micromagnetics \sep Dzyaloshinskii-Moriya interaction
\MSC 49S05: Variational principles of physics 
\sep 35Q51: Solitons 
\sep 82D40: Magnetic materials 
\sep 34B15: Nonlinear boundary value problems
\end{keyword}

\end{frontmatter}

\section{Introduction}
\label{sec:introduction}

Magnetic skyrmions are two-dimensional topological solitons
$\magn:\mathbb{R}^2 \cup \{\infty\} \to \mathbb{S}^2$ with
$\deg \magn= \pm 1$.
After their theoretical prediction \cite{BogdanovYablonskii_JETP1989,BogdanovHubert_JMMM1994} they have been observed in ferromagnets with the Dzyaloshinskii-Moriya (DM) interaction and techniques have been developed for individual skyrmions to be created and annihilated in a controlled manner \cite{RommingHanneken_Science2013}.
DM interaction arises from the loss of chiral symmetry induced by the underlying crystal structure or due to thin-film or multilayer geometries.
Chiral interaction terms and chiral skyrmions also arise in variational models for other condensed matter systems including spin-orbit coupled Bose-Einstein condensates (BEC) \cite{Aftalion_Mason2013, AftalionRodiac2020} or nematic liquid crystals \cite{Ackerman2014, Mermin_Wright}.

Our  model is based on a micromagnetic energy functional that includes exchange, DM and easy-axis anisotropy terms.
The system can be described by a single dimensionless DM parameter $\dmscaled$, defined in Eq.~\eqref{eq:epsilon}, given as the ratio of the DM parameter divided by (half) the domain wall energy. It is known (though not rigorously proven yet) that, there are only two phases minimizing the energy per unit area: the uniform phase and the helical phase for small and for large DM parameter $\dmscaled$ respectively \cite{BogdanovHubert_JMMM1994}.
The spiral state has negative energy and it is represented by one-dimensional (1D) modulations in the form of a distorted flat helix perpendicular to the helix propagation vector. 
This is a periodic solution of the 1D static Landau-Lifshitz equation. The transition from the spiral to the uniform state occurs at $\dmscaled=2/\pi$, and it is achieved as the period of the spiral goes to infinity for $\dmscaled\to2/\pi$. 

The isolated chiral skyrmion is an excited state in the parameter regime of non-negative energy where the uniform state is the absolute energy minimizer.
Most approaches are based on the assumption of axial symmetry so that $\magn$ is represented by its polar angle $\T=\T(r)$ depending on the radial coordinate $r>0$.
The existence of skyrmionic solutions as local minimizers of the micromagnetic energy has been rigorously proven for the case of an external field \cite{Melcher_PRSA2014, Li_Melcher_JFA2018}.
The argument has been extended to the case of uniaxial anisotropy including stray-field interaction \cite{BMS_arXiv, BMS_PRB}, and to director models of chiral liquid crystals \cite{Greco2019}.

Skyrmionic solutions of the static Landau-Lifshitz equation in the presence of a DM term can be found by numerical methods 
\cite{BogdanovHubert_JMMM1994,LeonovMonchestky_NJP2016}.
Numerical results provide the phase diagram for the existence of skyrmions and various features of the skyrmion profile.
The skyrmion profile determines to a large extent, and sometimes crucially, the skyrmion properties \cite{EverschorMasellReeveKlaeui_JAP2018}.
Its details are thus essential for the manipulation of individual skyrmions. Skyrmions exhibit different morphologies depending on the size of $\dmscaled$, see Figure~\ref{fig:smallLarge}.
For skyrmions of large radius, an ad-hoc ansatz based on explicit (1D) domain wall profiles \cite{Braun_PRB1994} has been suggested and is widely used to examine structural and dynamic properties, see, e.g., \cite{RommingKubetzka_PRL2015,Zhou_NCOMM_2015,BuettnerLemeshBeach_srep2018}.
In Ref.~\cite{KravchukShekaBrink_PRB2018} a 1D profile with the domain wall width as an additional parameter is used.
The profile enters in formulae for dynamical phenomena, for example, skyrmion translation and oscillation modes \cite{SchuetteGarst_PRB2014, KravchukShekaBrink_PRB2018} or antiferromagnetic skyrmion excitations \cite{KravchukGomonaySheka_PRB2019}, and it is crucial for quantitative calculations.
In recent years, sufficient resolution has been obtained for the observation of the features of the skyrmion profile in great detail \cite{RommingKubetzka_PRL2015,BoulleVogel_nnano2016,LeonovMonchestky_NJP2016,McGroutherLamb_NJP2016,KovacsBorkovski_PRL2017,ShibataTokura_PRL2017,MeyerPerini_NatComm2019}.
The availability of a  detailed analytical description of the skyrmion profile is thus important and it will open the way for a wider exploitation of individual skyrmions.

\begin{figure} 
    \centering
    (a)\includegraphics[width=6cm]{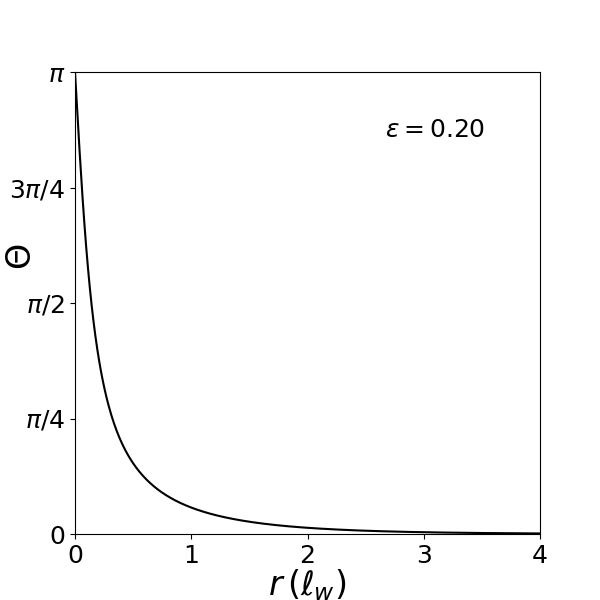}
    (b)\includegraphics[width=6cm]{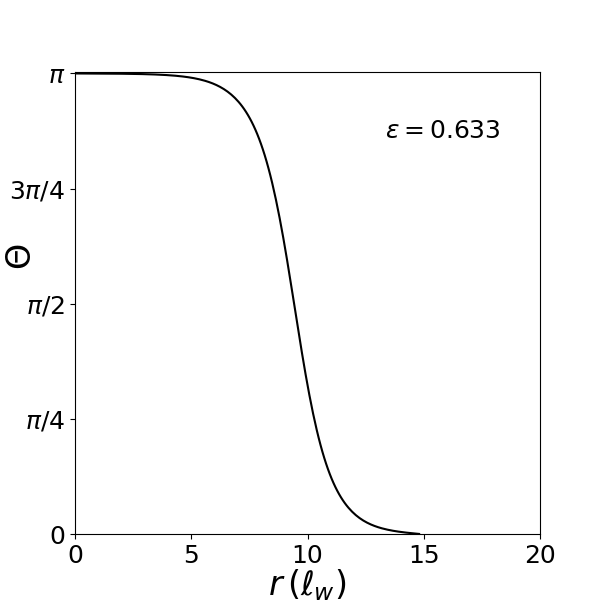}
    \caption{In contrast to the small skyrmion (a) being a localized perturbation of the Belavin-Polyakov soliton, the key feature of the large skyrmion (b) is a domain wall structure that separates the core from the far field.
The slope of the profile approaches $\pi$ exponentially in the limit of large radius.}
    \label{fig:smallLarge}
\end{figure}

For the case of skyrmions of small radius, analytic formulae for the profile of axisymmetric skyrmions have been derived  \cite{KomineasMelcherVenakides_arXiv2019}.
In this asymptotic regime where the dimensionless DM parameter $\dmscaled$ is small, magnetic skyrmions are well approximated on small scales by the classical Belavin-Polyakov soliton.
The results in Ref.~\cite{KomineasMelcherVenakides_arXiv2019} provide a quantitative description of this approximation in terms of asymptotic formulae for the skyrmion radius $R \sim \frac{\dmscaled}{|\ln \dmscaled|}$ and energy $E - 4\pi \sim \frac{\dmscaled^2}{\ln \dmscaled}$ for $\dmscaled \ll 1$.

In this paper, we derive formulae for the skyrmion profile in the case of large skyrmion radius by employing asymptotic methods that give analytic approximations of the skyrmion solutions for the time-independent Landau-Lifshitz equation. 
Our analysis predicts a breakdown of skyrmions solutions, via a diverging radius, when approaching the threshold value $\dmscaled=2/\pi$ from below, and thereby supports its role as critical constant.
We define the skyrmion radius $R$ via $\Theta(R)=\pi/2$ and we write the angle $\Theta$ as an asymptotic series
\begin{equation}  \label{eq:Theta_series1}
 \T = \T_0 + \frac{\T_1}{R}+\frac{\T_2}{R^2}+\frac{\T_3}{R^3} + \cdots.
\end{equation}
It is customary to define a variable $T=R-r$ which shifts the origin from the center to the radius of the skyrmion, and to consider $\T=\T(T)$. 

A key observation is the following parity property of the expansion \eqref{eq:Theta_series1}. We show that the functions $\T_n(T)$ are  odd functions of $T$ if $n$ is even; they are even functions, if $n$ is odd. 
The angle $\T_0$ coincides with the functional form of the 1D domain wall (Bloch wall), and the $\T_n,\,n=1,2,\ldots$ are asymptotic corrections.
We also show that the DM parameter $\dmscaled$ is expressed as the even asymptotic series 
\begin{equation}  \label{eq:eps-R0}
    \dmscaled = \dmscaled_0 + \frac{\dmscaled_2}{R^2} + \frac{\dmscaled_4}{R^4} + \cdots
\end{equation}
We obtain $\dmscaled_0=2/\pi$ (within the two-dimensional model) and this coincides with the value at which the transition from the ferromagnetic regime to the helical regime takes place.
The numerical values of the coefficients $\dmscaled_n,\,n=2,4,\cdots$ are calculated. Finally we show that the skyrmion energy $E$ is expressed as the odd asymptotic series 
\begin{equation}  \label{eq:energy-R}
E = \frac{E_1}{R} + \frac{E_3}{R^3}+\cdots
\end{equation}
The results of the present analysis for large radius, taken in combination with the results of Ref.~\cite{KomineasMelcherVenakides_arXiv2019} for small radius give a reasonably complete description of the skyrmion profile and energy depending on $\dmscaled$.

The paper is arranged as follows.
In Section~\ref{sec:equation}
we explain the mathematical model and present the equation for the skyrmion profile, while in subsection~\ref{sec:Core_outerRegion} we give the formulae for the skyrmion profile in the core and the outer region.
In Section~\ref{sec:higherOrder} we give a systematic method to obtain an asymptotic series for the skyrmion profile.
In Section ~\ref{sec:numerics} we apply the theory of the previous section and obtain numerical values for the asymptotic formulae.
In Section~\ref{sec:Pohozaev} we derive a Pohozaev identity and apply this to find explicit formulae for the $\dmscaled$ vs $R$ relation.
In Section~\ref{sec:energy} we give an asymptotic expansion for the skyrmion energy.
\ref{sec:core_outerRegin_app} contains the details of the calculations for the skyrmion profile in the core and in the outer region.
\ref{sec:parity} contains the proof of a theorem which establishes a fundamental parity property for the skyrmion profile.
\ref{sec:calculations} contains the details of the calculations for the $\dmscaled_2$ and for the energy expansion.

\section{The skyrmion equation}
\label{sec:equation}

We consider a two-dimensional ferromagnet on the $xy$-plane with exchange, Dzyaloshinskii-Moriya interaction, and anisotropy of the easy-axis type perpendicular to the plane.
The micromagnetic structure is described via the magnetization vector $\magn=\magn(x,y)$ with a fixed magnitude normalized to unity, $\bm{m}^2=1$.
The normalized form of the micromagnetic energy reads \cite{KomineasMelcherVenakides_arXiv2019}
\begin{equation} \label{eq:E0}
E_{\dmscaled}(\magn)=\int \left[ \hf \p_\mu\magn\cdot\p_\mu\magn + \hf(1-m_3^2) + \dmscaled\, e_{\rm DM} \right] \,\mathrm{d}x.
\end{equation}
A summation over repeated indices $\mu=1,2$ is assumed.
The last term in the parenthesis in Eq.~\eqref{eq:E0} models the DM interaction. Prototypical cases are the bulk DM interaction form $e_{\rm DM} = \bm{\hat{e}}_\mu\cdot (\p_\mu\magn\times\magn)$ and the interfacial DM interaction form $e_{\rm DM} = \emn \bm{\hat{e}}_\mu\cdot (\p_\nu\magn\times\magn)$, where $\emn$ is the totally antisymmetric two-dimensional tensor. 
Here $\ex,\ey,\ez$ are the unit vectors for the magnetization in the respective directions. Static magnetization configurations satisfy the static Landau-Lifshitz equation
\begin{equation} \label{eq:LL0}
\magn \times \left( \p_\mu\p_\mu\magn + m_3 \ez - 2\dmscaled\, \bf{h}_{\rm DM} \right) = 0.
\end{equation}
where the last term is the DM field with
$\bf{h}_{\rm DM} = \bm{\hat{e}}_\mu\times \p_\mu\magn$ in case of bulk interaction or $\bf{h}_{\rm DM} = \emn \, \bm{\hat{e}}_\mu\times \p_\nu\magn$ in case of interfacial DM. In Eq.~\eqref{eq:E0} and \eqref{eq:LL0}, lengths are measured in units of the domain wall width $\ldw = \sqrt{A/K}$, where $A$ is the exchange and $\Anisotropy$ the anisotropy constant.
The equation contains a single parameter
\begin{equation}  \label{eq:epsilon}
\dmscaled = \frac{\ldk}{\ldw} = \frac{\DM}{2\sqrt{A\Anisotropy}}
\end{equation}
defined via an additional length scale of this model $\ldk = \DM/(2\Anisotropy)$, where $\DM$ is the DM parameter (in Ref.~\cite{BogdanovHubert_JMMM1994}, a parameter which differs from $\dmscaled$ only by a constant factor has been introduced).
We will refer to $\dmscaled$ as the {\it dimensionless DM parameter}, but one should keep in mind that it can also be controlled by changing the anisotropy or the exchange parameter.
The lowest energy (ground) state is the spiral for $\dmscaled > 2/\pi$ and the ferromagnetic state for $\dmscaled < 2/\pi$ \cite{BogdanovHubert_JMMM1994}.

Let us consider the angles $(\Theta, \Phi)$ for the spherical parametrization of the magnetization vector, and the polar coordinates $(r,\phi)$ for the film plane.
We assume an axially symmetric skyrmion with $\Phi = \phi+\phi_0$ and $\Theta=\Theta(r)$.
For a bulk DM term the energy is minimized for $\phi_0=\pi/2$ (Bloch skyrmion) and for interfacial DM interaction we choose $\phi_0=0$ (N\'eel skyrmion).
A value $0<\phi_0<\pi/2$ should be chosen if the DM term is a combination of the bulk and interfacial terms.

The skyrmion profile arises as a local minimizer of the energy
\begin{equation}\label{eq:energy}
 E_{\dmscaled}(\magn)=2\pi\int_0^\infty \left[ \frac{1}{2}\left(\frac{\der\T}{\dr}\right)^2   
+\frac{1}{2}\left(1+\frac{1}{r^2}\right)\sin^2\T
+\dmscaled\left(\frac{\der\T}{\dr}
+\frac{1}{2r}\sin2\T\right)\right] r \dr
\end{equation}
of 
\[
\magn(r, \phi)=\left(\sin \T \cos(\phi+\phi_0),\sin \T \sin(\phi+\phi_0), \cos \T \right)
\]
whereby $\Theta=\Theta(r)$ satisfies the equation 
\begin{equation}  \label{eq:thetaODE}
\Theta''+\frac{\Theta'}{r}-\frac{\sin(2 \Theta)}{2 r^2} - \frac{\sin(2 \Theta)}{2}
+ 2\dmscaled \frac{\sin^2 \Theta}{r} = 0
\end{equation}
with boundary conditions $\Theta(0)=\pi$ and $\lim_{r \to \infty}\Theta(r)=0$.
The same equation applies to all types of skyrmions, e.g., Bloch and N\'eel skyrmions for the respective DM terms.

\subsection{The skyrmion core and the outer region}
\label{sec:Core_outerRegion}

We study skyrmions with large radius $R$, defined by the equation 
\[
\T(R)=\frac{\pi}{2}.
\]
The skyrmion profile exhibits three spatial regions.
The skyrmion core is the region where the value of $\T$ is close to $\pi$ (magnetization pointing close to the south pole).
The outer region (or far field) is where $\T$ is exponentially close to zero (magnetization pointing close to the north pole).
The skyrmion domain wall is the thin region that connects the core and the outer region.
Eq.~\eqref{eq:thetaODE} reduces to the modified Bessel equation both at the skyrmion core and in the far field and it is studied in \ref{sec:core_outerRegin_app}.
Using asymptotic analysis, we obtain the following results.
Close to the skyrmion center, the deviation of the skyrmion  profile from $\pi$ is linear with an exponentially small factor (see Eqs.~\eqref{eq:Theta_solution_region_A}, \eqref{eq:C_1_value}),
\begin{equation} \label{eq:veryCore}
\T \approx \pi - e^{-R}\sqrt{2\pi R}\, r, \qquad r \ll 1.
\end{equation}
As $r$ increases, the deviation attains exponential growth; this is held in check by the  small factor throughout the skyrmion core, up to the approach to the domain wall (see Eqs.~\eqref{eq:I1_asymptotic}, \eqref{eq:C_1_value}),
\begin{equation} \label{eq:core}
\T \approx \pi - 2\sqrt{\frac{R}{r}}\,e^{r-R},\qquad  1\ll r \ll R.
\end{equation}
The leading approximation of the skyrmion domain wall profile is independent of the radius when the radius is large, (see Sec.~\ref{sec:higherOrder}).
Past the domain wall, in the far field, the behavior is similar to the one of   skyrmions of small radius \cite{KomineasMelcherVenakides_arXiv2019}. We  have 
(see Eqs.~\eqref{eq:Theta_asymptotic_region_B}, \eqref{eq:C_2_value})
\begin{equation} \label{eq:farField}
\T \approx 2\sqrt{\frac{R}{r}}\,e^{-(r-R)},\qquad r \gg R.
\end{equation}
The core and the far field profiles are  matched with the respective sides of the domain wall profile to leading order.

\section{High order analysis of the skyrmion domain wall}
\label{sec:higherOrder}

We focus attention in the region of the skyrmion domain wall and develop an analysis valid to all orders in $R^{-1}$ for skyrmions of large radius.
The leading behavior of the solution as $R\to\infty$ is obtained by neglecting the terms of Eq.~\eqref{eq:thetaODE} with $r$ in the denominator.
The emerging equation 
\begin{equation}\label{eq:pendulum}
   \T''-\hf\sin( 2\Theta)=0
\end{equation}
characterizes the leading behavior of the domain wall of the skyrmion and has solution 
\begin{equation}\label{eq:pendulum_solution}
   \Theta_0=2\arctan\left(e^{-T}\right),
\end{equation}
where 
\begin{equation}  \label{eq:T_definition}
    T=r-R,   \ \ \ \ T\in(-R,\infty).
\end{equation}
The constants of integration follow from the requirements $\T_0=\frac{\pi}{2}$ when $T=0$ and $\T_0\to 0$ as $T\to \infty$. 
We calculate easily the following quantities that will be used below, \begin{equation}\label{eq:theta_0_related_quantities}
 \T_0'= -\sech T,\quad  \cos(2\T_0)=1-2\sech^2T,\quad
 \sin(2\T_0)=2\sech T\tanh T,\quad
 \sin^2\T_0=\sech^2T.
\end{equation}

Proceeding to a higher order analysis, we use the radius $R$ as the parameter of the problem.
We construct an asymptotic series for the profile $\Theta$ in negative powers of $R$ to all orders.
The profile $\T(T)$ is expanded to an asymptotic series for large $R$,
 \begin{equation}\label{eq:Theta_series}
     \T = \T_0 + \tilde{\T},\qquad 
     \tilde{\T} = \frac{\T_1}{R}+\frac{\T_2}{R^2}+\frac{\T_3}{R^3}+\cdots.
\end{equation}
$\T_0(T)$ is given by Eq.~\eqref{eq:pendulum_solution} and $\T_1, \T_2, \T_3,\cdots$ are also functions of $T$.
Necessarily $\dmscaled$ must be expressed in terms of the parameter $R$.
We choose the same form of asymptotic expansion as for $\Theta$, 
\begin{equation}  \label{eq:epsilon_expansion}
\dmscaled=\dmscaled_0+\frac{\dmscaled_1}{R}+\frac{\dmscaled_2}{R^2}+\frac{\dmscaled_3}{R^3}+\cdots.
\end{equation}
We introduce the expansion
\begin{equation}  \label{eq:radius_expansion}
    \frac{1}{r} = \frac{p}{T},\qquad p = \frac{T}{R}\left( 1-\frac{T}{R}+\frac{T^2}{R^2}+\cdots\right).
\end{equation}
The motivation for this notation is that $p$ is a power series of the ratio $T/R$.
We finally introduce the expansions of trigonometric functions isolating the leading order,
\begin{equation}  \label{eq:CS}
    \cos(2\tilde\T)=1+C(2\tilde\T), \qquad
    \sin(2\tilde\T)=2\tilde\T+S(2\tilde\T)
\end{equation}
where $C, S$ contain the higher order terms of the Taylor expansions about zero of the cosine and sine functions, respectively.
Inserting the series \eqref{eq:Theta_series} for $\T$ into Eq.~\eqref{eq:thetaODE}, applying the identities of trigonometric addition and using Eqs.~\eqref{eq:epsilon_expansion}, \eqref{eq:radius_expansion}, \eqref{eq:CS} obtains
\begin{equation}  \label{eq:Theta_n_ODE0}
   \tilde{\Theta}''-\cos(2\Theta_0)\tilde{\T}=\tilde{g}
\end{equation}
where the prime denotes differentiation with respect to $T$, and
\begin{equation}
    \tilde g=\frac{g_1}{R}+\frac{g_2}{R^2}+\frac{g_3}{R^3}+\cdots.
\end{equation}
The explicit form of $\tilde{g}$ is given in Eq.~\eqref{eq:thetaODE_expanded_form_3}.
The hierarchy of linear nonhomogeneous equations for the functions $\T_n$ is obtained directly from Eq.~\eqref{eq:Theta_n_ODE0},
\begin{equation}  \label{eq:Theta_n_ODE}
    \T_n''-(\cos2\T_0)\,\T_n=g_{n}, \qquad n=1, 2, 3,\cdots.
\end{equation}
The forcing term $g_{n}$ of the equation for $\T_n$ may depend only on the functions $\T_l$ with $l\le n-1$.
All equations have the same homogeneous part.
All equations are given the initial condition $\T_n(T=0)=0$.

The homogeneous equation corresponding to the hierarchy \eqref{eq:Theta_n_ODE} is
\begin{equation}  \label{eq:homogeneous}
  \T_H''-(1-2\sech^2T)\,\T_H=0.
\end{equation}
This equation describes the motion of a quantum mechanical particle in a potential well (see, e.g., Ref.~\cite{LandauLifshitz_QuantumMechanics}, page 73).
The potential equaling negative $\sech^2 T$ is one of the Bargmann reflectionless potentials, a class of potentials of the one-dimensional Schr\"odinger operator having bound states with negative energy and zero reflection coefficient for all positive energies \cite{NovikoVManakov}.
Eq.~\eqref{eq:homogeneous} has the explicit basis solutions
\begin{equation}\label{eq:H_1_H_2}
   H_1=\sech T, \qquad  H_2=\sinh T+T\sech T.
\end{equation}
Their Wronskian is given by
\begin{equation}
  \det \, \begin{pmatrix}
    H_1 & H_2\\ H_1' & H_2'
   \end{pmatrix}=2.
\end{equation}
Using the formula of the variation of constants, we obtain
\begin{equation}  \label{eq:variation_of_constants}
  \T_n = -\hf H_1(T)\int_0^T g_{n}(\tau) H_2(\tau)\,\dtau
   + \hf H_2(T)\int_{-\infty}^T g_{n}(\tau)H_1(\tau)\,\dtau,\quad n=1,2,3,\cdots.
\end{equation}
This solution satisfies the boundary condition $\Theta_n(0)=0$ and the solvability condition (boundary condition at infinity)
\begin{equation}  \label{eq:finitenessCondition}
    \int_{-\infty}^\infty g_{n}(\tau)H_1(\tau)\,\dtau = 0,\qquad n=1,2,3,\cdots.
\end{equation}
The condition is the consequence of the fact that  Eq.~\eqref{eq:Theta_n_ODE} for $\T_n$ has the form $\mathbb{L}\T_n=g_n$, where $\mathbb{L}$ is a selfadjoint differential operator.
Since $H_1(T)$ is in its nullspace, the inner product $(g_n,H_1)=(\mathbb{L}\T_n, H_1)=(\T_n,\mathbb{L}H_1$)=0.

The calculation of the $\T_n$ is recursive; in order to demonstrate the calculational pattern, we examine  the explicit form of the functions $g_1, g_2, g_3$
\begin{equation}  \label{eq:g1g2g3}
\begin{split}
    g_1 & = - (\T_0' + 2\dmscaled_0 \sin^2\T_0) \\
    g_2 & = T (\T_0' + 2\dmscaled_0 \sin^2\T_0) - 2 \dmscaled_1 \sin^2\T_0 + \sin 2\T_0 \left(\hf - 2\dmscaled_0 \T_1 - \T_1^2 \right) - \T_1' \\
    g_3 & = -T^2 (\T_0' + 2\dmscaled_0 \sin^2\T_0) + 2 (T\dmscaled_1 - \dmscaled_2)\sin^2\T_0 + T\T_1' - \T_2' \\
    & + \sin 2\T_0 \left[ -T + (T\dmscaled_0 - \dmscaled_1) 2\T_1 -2\dmscaled_0 \T_2 - 2 \T_1\T_2  \right]
    + \cos 2\T_0 \left( \T_1 - 2\dmscaled_0 \T_1^2 -\frac{2}{3} \T_1^3 \right).
\end{split}
\end{equation}
We make the following observations
\begin{enumerate}
    \item $\T_0$ is an odd function of $T$, thus, $g_1$ is even. Inserting $g_1$ into the solvability condition \eqref{eq:finitenessCondition}, produces the value of $\dmscaled_0$.
    \item The fact that $g_1$ in
     Eq.~\eqref{eq:variation_of_constants}
     is even implies that also $\T_1$ is even.
   \item All the terms of $g_2$ are odd with the exception of the term multiplied by $\dmscaled_1$, which is even. Inserting $g_2$ into the solvability condition \eqref{eq:finitenessCondition}, produces $\dmscaled_1=0$. Thus, $g_2$ is odd.
   \item The fact that $g_2$ in  Eq.~\eqref{eq:variation_of_constants} is odd implies that also $\T_2$ is odd.
   \item All terms of $g_3$ are even. Inserting $g_3$ into the solvability condition \eqref{eq:finitenessCondition} produces the value of $\dmscaled_2$.
\end{enumerate}
The cycle continues periodically according to the flow chart 
\[
\underbrace{\Theta_0}_{\hbox{odd}} \rightarrow \underbrace{g_1}_{\hbox{even}} \rightarrow 
\underbrace{\left\{ \begin{matrix} \dmscaled_0 \\ \Theta_1 \end{matrix} \right\}}_{\hbox{even}} \rightarrow
\underbrace{g_2}_{\hbox{odd}} \rightarrow 
\underbrace{\left\{ \begin{matrix} \dmscaled_1=0 \\ \Theta_2 \end{matrix} \right\}}_{\hbox{odd}} \rightarrow
\underbrace{g_3}_{\hbox{even}} \rightarrow 
\underbrace{\left\{ \begin{matrix} \dmscaled_2 \\ \Theta_3 \end{matrix} \right\}}_{\hbox{even}} \rightarrow
\underbrace{g_4}_{\hbox{odd}} \rightarrow 
\underbrace{\left\{ \begin{matrix} \dmscaled_3=0 \\ \Theta_4 \end{matrix} \right\}}_{\hbox{odd}} \rightarrow
\cdots
\]
with odd indexed $g_n$ and $\T_n$ being even and even indexed $g_n$ and $\T_n$ being odd.

The coefficient $\dmscaled_n$ makes its first appearance in the expression of $g_{n+1}$ multiplying the term $-2\sech^2 T$ for every $n$.
Using this, the solvability condition \eqref{eq:finitenessCondition} produces the values
\begin{equation}  \label{eq:finitenessCondition_explicit}
    \dmscaled_n = \frac{1}{\pi}\int_{-\infty}^\infty (\sech\tau)\,g_{n+1}(\tau)|_{\dmscaled_n=0}\,\dtau,\qquad n=0,1,2,3,\cdots.
\end{equation}
For every odd $n$ the integrand is odd giving $\dmscaled_n=0$, i.e., all odd indexed $\dmscaled_n$ vanish.
As a result, relation \eqref{eq:epsilon_expansion} of the dimensionless DM parameter with the skyrmion radius is simplified to
\begin{equation}  \label{eq:epsilon-R}
\dmscaled = \dmscaled_0 + \frac{\dmscaled_2}{R^2} + \frac{\dmscaled_4}{R^4} + \frac{\dmscaled_6}{R^6} + \cdots.
\end{equation}
The parity results stated here are proved in the following theorem.
\begin{theorem}  \label{thm:parity}
Let $\dmscaled_{2i-1}=0$ for $i=1,2,3,\cdots$.
Then, the following parity conditions hold.
\begin{enumerate}
   \item For all $n\ge1$, the functions $g_n=g_n(T)$ are even if $n$ is odd and they are odd if $n$ is even.
   \item The same is true for the functions $\T_n=\T_n(T)$, for $n\ge 0$.
\end{enumerate}
\end{theorem}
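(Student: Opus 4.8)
The plan is to run one strong induction on $n$ that proves statements~(1) and~(2) together, following the order $\ldots\to\T_{n-1}\to g_n\to\T_n\to\ldots$ forced by the hierarchy~\eqref{eq:Theta_n_ODE}: the forcing $g_n$ is assembled only from $\T_0,\dots,\T_{n-1}$ and $\dmscaled_0,\dots,\dmscaled_{n-1}$, and then $\T_n$ is recovered from $g_n$ through the variation-of-constants formula~\eqref{eq:variation_of_constants}. I abbreviate the target parity by saying a function has \emph{parity} $(-1)^{n-1}$, meaning even for $n$ odd and odd for $n$ even. The base case comes from~\eqref{eq:pendulum_solution} and~\eqref{eq:theta_0_related_quantities}: the identity $\T_0(-T)=\pi-\T_0(T)$ shows that $\T_0-\pi/2$ is odd (the case $n=0$), that $\T_0'$, $\cos2\T_0$, $\sin^2\T_0$ are even and $\sin2\T_0$ is odd, and hence, reading off~\eqref{eq:g1g2g3}, that $g_1$ is even.

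The heart of the step is the parity of $g_n$, which I would control by a uniform bookkeeping rather than by inspecting terms individually. Assign to each elementary factor a \emph{grade} equal to its power of $R^{-1}$ plus $0$ or $1$ according as the factor is even or odd, taken modulo $2$; grades add under multiplication. Using (a) the fixed parities of the $\T_0$-data, (b) the inductive hypothesis that $\T_l$ has parity $(-1)^{l-1}$ for $1\le l\le n-1$ (so $\T_l'$ has the opposite parity), (c) the expansion~\eqref{eq:radius_expansion}, in which the coefficient of $R^{-k}$ in $1/r$ is the monomial $(-1)^{k-1}T^{k-1}$ and that in $1/r^2$ has the opposite parity, and (d) the hypothesis $\dmscaled_{2i-1}=0$, which restricts $\dmscaled$ to even powers of $R^{-1}$, one checks that $1/r$ and every $\T_l$ carry odd grade, while $1/r^2$ and $\dmscaled$ carry even grade, and that the nonlinear remainders $C(2\tilde\T)$ and $S(2\tilde\T)$ of~\eqref{eq:CS} carry even and odd grade respectively. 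Consequently each of the five constituents of $\tilde g$, namely the two trigonometric remainders together with $-\T'/r$, $\sin2\T/(2r^2)$ and $-2\dmscaled\,\sin^2\T/r$, has odd grade as a whole; reading off the coefficient of $R^{-n}$ then gives that $g_n$ has parity $(-1)^{n-1}$, which is statement~(1).

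To propagate the parity to $\T_n$ I would feed $g_n$ into~\eqref{eq:variation_of_constants} and use that the basis~\eqref{eq:H_1_H_2} splits into the even solution $H_1=\sech T$ and the odd solution $H_2$. If $g_n$ is even, then $g_nH_2$ is odd and $\int_0^T g_nH_2\,\dtau$ is even, so the first term is even; moreover $g_nH_1$ is even and $F(T)=\int_{-\infty}^T g_nH_1\,\dtau$ is odd \emph{exactly because} the solvability condition~\eqref{eq:finitenessCondition} forces $\int_{-\infty}^\infty g_nH_1\,\dtau=0$, so the second term $\hf H_2F$ is even as well. The odd case is the mirror image, the total integral now vanishing automatically by oddness of the integrand. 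Either way $\T_n$ inherits the parity of $g_n$, closing the induction; statement~(1) could alternatively be recovered a posteriori from statement~(2) via $g_n=\T_n''-(\cos2\T_0)\,\T_n$.

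The step I expect to be the main obstacle is the \emph{uniformity} in (b)--(d): one must be certain that no order of the nonlinear expansion generates an anomalous parity, which is precisely why the grade invariant---stable under products, under the geometric factors $1/r$ and $1/r^2$, and under the even series $\dmscaled$---is the right tool rather than the finite lists~\eqref{eq:g1g2g3}. A second delicate point is that the parity of the half-line integral in~\eqref{eq:variation_of_constants} is correct only when solvability holds; for odd $n$ this is guaranteed because the even-indexed coefficient $\dmscaled_{n-1}$ is fixed by~\eqref{eq:finitenessCondition_explicit} to enforce it, while for even $n$ it is automatic. As an a priori check on the scheme, statements~(1)--(2) are equivalent to the involution symmetry $\T(T;R)=\pi-\T(-T;-R)$ of~\eqref{eq:thetaODE} written in $T=r-R$, which holds because $\dmscaled(-R)=\dmscaled(R)$ under the hypothesis; order-by-order uniqueness of the asymptotic series then forces invariance and reproduces statement~(2).
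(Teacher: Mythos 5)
Your proof is correct, and its skeleton---a strong induction alternating $g_n \to \T_n$, with the parity of $\T_n$ propagated from $g_n$ through the variation-of-constants formula \eqref{eq:variation_of_constants} and the even/odd basis $H_1$, $H_2$---is the same as the paper's. The genuine difference is in how the parity of $g_n$ is established. The paper expands $\tilde g$ in full (Eq.~\eqref{eq:thetaODE_expanded_form_3}), verifies the linear terms case by case, and then needs a separate combinatorial claim (the counts $N_{even}$, $N_{odd}$, $N_{total}$ of factors of odd multiplicity) to control the nonlinear remainders $C(2\tilde\T)$ and $S(2\tilde\T)$; your multiplicative $\mathbb{Z}_2$-grade subsumes that claim automatically, since $\tilde\T$ has odd grade, so the even powers of $\tilde\T$ inside $C$ and the odd powers inside $S$ carry even and odd grade with no counting needed, and your five-constituent grouping of $\tilde g$ avoids the fully expanded form altogether. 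Notably, your grade is precisely the class algebra $A,B,C,D$ that the paper itself introduces later to prove the energy theorem ($A,B$ have grade $1$; $C,D$ grade $0$), so your argument in effect unifies the parity and energy proofs. You are also more careful than the paper on one step: evenness of the term $\hf H_2(T)\int_{-\infty}^{T} g_n H_1\,\dtau$ genuinely hinges on the solvability condition \eqref{eq:finitenessCondition}, which you invoke explicitly, whereas the paper argues obliquely by discarding an exponentially growing odd homogeneous solution. Finally, two additions not in the paper are worthwhile: your reading of the $n=0$ case ($\T_0-\pi/2$ odd, rather than $\T_0$ odd) is the correct interpretation of the statement, and your closing involution $\T(T;R)=\pi-\T(-T;-R)$---valid because the hypothesis makes $\dmscaled$ even in $R$, and effective because boundedness together with $\T_n(0)=0$ makes the expansion unique order by order---gives an independent, more conceptual route to statement (2).
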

The theorem is instrumental for the following calculations.
Its proof, involving some subtlety, is relegated to \ref{sec:parity} in order to allow the flow of the calculation to be continued uninterrupted.

\section{Numerics}
\label{sec:numerics}

\begin{figure}
    \centering
    (a)\includegraphics[width=4.8cm]{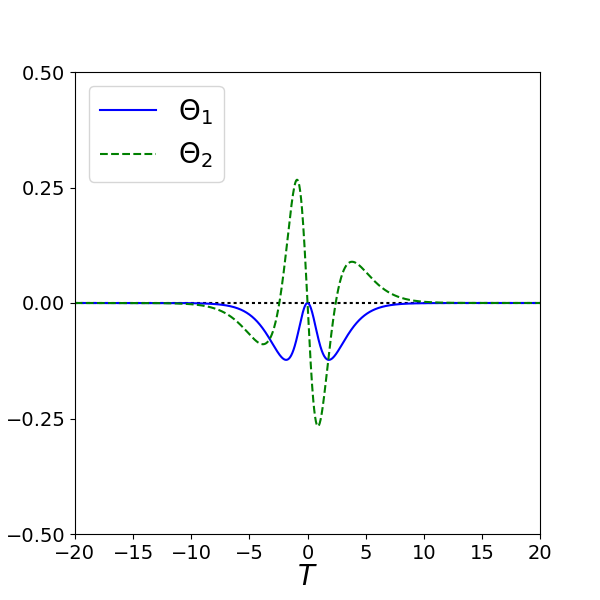} \hspace{5pt}
    (b)\includegraphics[width=4.8cm]{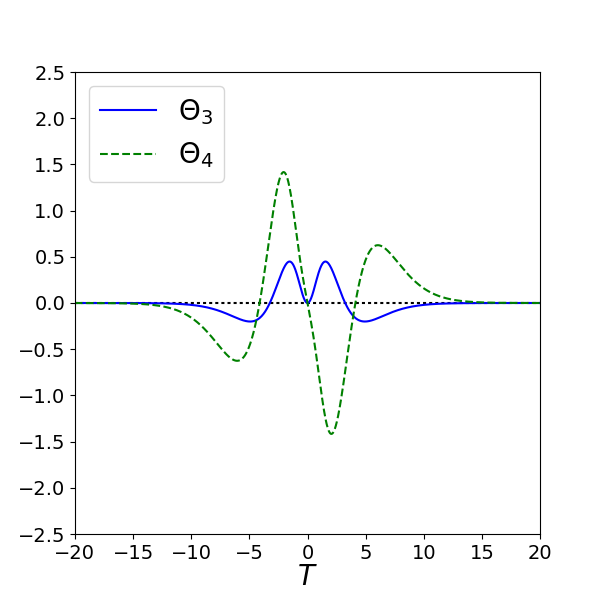} \hspace{5pt}
    (c)\includegraphics[width=4.8cm]{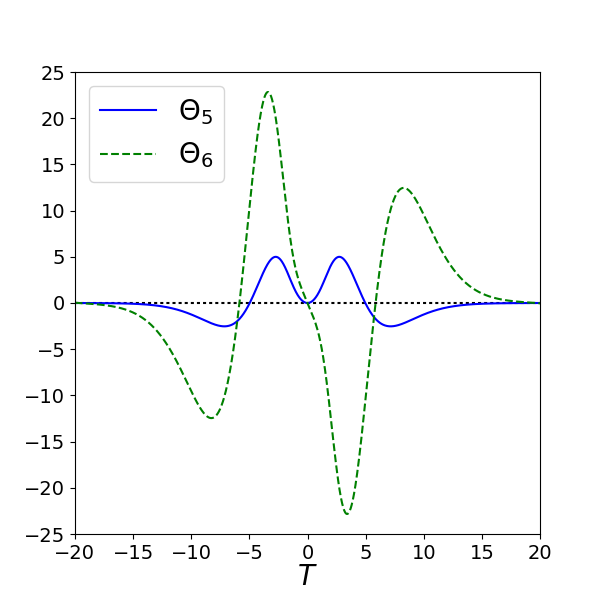}
    \caption{The functions (a) $\T_1,\, \T_2$, (b) $\T_3,\, \T_4$, and (c) $\T_5,\, \T_6$ calculated by numerical evaluation of the integrals in Eq.~\eqref{eq:variation_of_constants}.
    As seen by the change of scale of the vertical axis in the three entries, the values of the functions $\T_n$ increase fast with increasing index $n$.
    Note the even parity of the odd indexed functions and the odd parity of the even indexed ones with respect to the variable $T=r-R$.
    }
    \label{fig:thetas}
\end{figure}

We proceed to find $\T_1, \T_2, \T_3, \T_4, \T_5, \T_6$ by applying Eq.~\eqref{eq:variation_of_constants}.
The expressions for $g_n$ for higher $n$ are long and they have been derived using the mathematics software system SageMath \cite{sagemath}.
The derivatives $\Theta_n'$ needed in the expressions of $g_n$ are found by finite differences in the numerical calculation.
Fig.~\ref{fig:thetas} shows functions $\T_1$ through $\T_6$.
As expected from Theorem~\ref{thm:parity}, odd indexed $\T_n$'s are even functions of $T$ and even indexed ones are odd.
Functions $\T_n$ with higher index $n$ take higher values and they take significant values over larger intervals of $T$.
These features have consequences for the quality of the approximation, especially for small $R$, as we shall see in the following.

\begin{figure}
    \centering
    (a)\includegraphics[width=6cm]{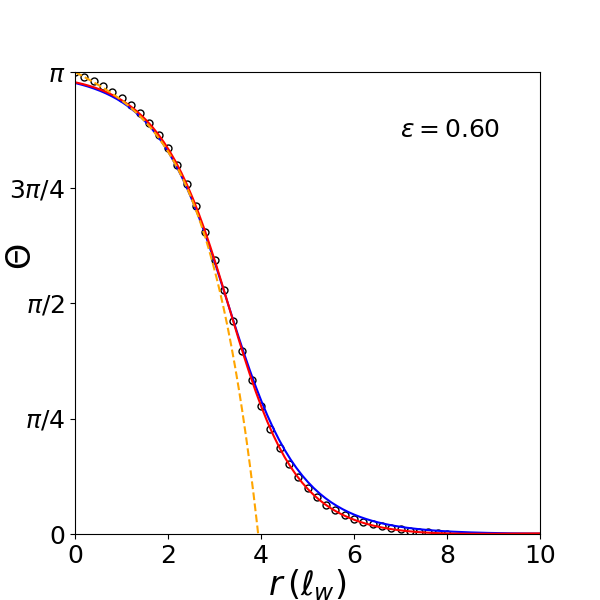} \hspace{10pt}
    (b)\includegraphics[width=6cm]{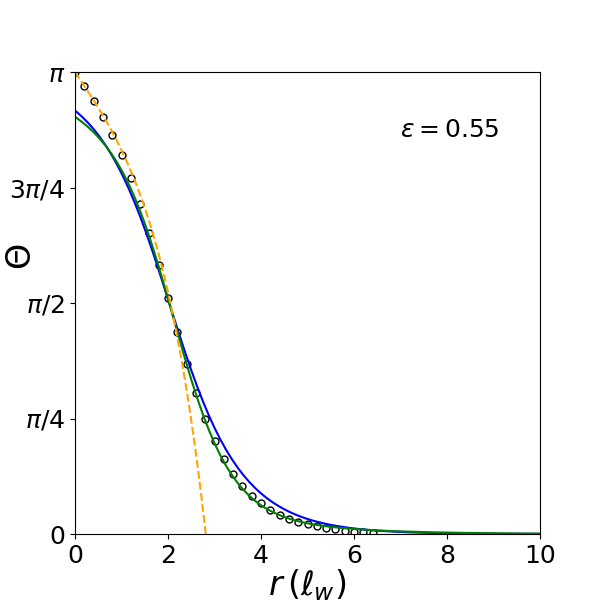}
    \caption{The profile of a skyrmion $\T(r)$ is shown by small circles for two values of the parameter $\dmscaled$, obtained numerically by solving the original Eq.~\eqref{eq:thetaODE} using a shooting method.
    The blue line shows $\T_0$, which is the one-dimensional domain wall profile.
    The orange line shows the solution \eqref{eq:Theta_solution_region_A} for the skyrmion profile at the core, obtained by linearizing the original equation about $\T=\pi$.
    (a) For $\dmscaled=0.60$ the skyrmion has a radius $R=3.29\,\ldw$, as obtained by the shooting method.
    The red line shows the series \eqref{eq:Theta_series} summed up to the term $\T_6$.
    (b) For $\dmscaled=0.55$ the skyrmion has a radius $R=2.02\,\ldw$.
    For this smaller value of $R$ the optimal point of truncation of the series occurs at the term $\T_2$.
    The green line shows the series \eqref{eq:Theta_series} summed up to and including the term $\T_2$.
    }
    \label{fig:profile}
\end{figure}

We have calculated by a shooting method the skyrmion profiles, which are solutions of Eq.~\eqref{eq:thetaODE}, for various values of the parameter $\dmscaled$. (We have actually solved an equation for the stereographic projection of the magnetization vector which is equivalent to Eq.~\eqref{eq:thetaODE}, as described in Ref.~\cite{KomineasMelcherVenakides_arXiv2019}.)
They are shown in Fig.~\ref{fig:profile} by small circles for two values of the parameter $\dmscaled$.
In Fig.~\ref{fig:profile}a, we have $\dmscaled=0.60$ that gives a skyrmion of radius $R=3.29\,\ldw$.
The profile at the skyrmion core is approximated very well by the solution given in Eq.~\eqref{eq:Theta_solution_region_A} with the coefficient given in Eq.~\eqref{eq:C_1_value}, and is shown as an orange dashed line in the figure.
The blue solid line shows the one-dimensional domain wall profile $\Theta_0$, shown in Eq.~\eqref{eq:pendulum_solution}, centred at the skyrmion radius position $r=R$.
The red line shows the series solution \eqref{eq:Theta_series} for $\Theta$ up to the term $O(1/R^6)$.
The approximation of the skyrmion profile is excellent for all $r$ except near the skyrmion center $r=0$.
In Fig.~\ref{fig:profile}b, we have $\dmscaled=0.55$ and a smaller skyrmion radius $R=2.02\,\ldw$.
The solution \eqref{eq:Theta_solution_region_A} still gives an excellent approximation at the skyrmion core.
The green line shows the series solution \eqref{eq:Theta_series} for $\Theta$ up to the term $O(1/R^2)$ and obtains a good approximation of the profile, especially around the skyrmion radius and for $r>R$.
The terms of the series \eqref{eq:Theta_series} of order higher than $O(1/R^2)$ cannot be used to improve the approximation, and they rather give larger deviations from the true profile if added to the series.
This phenomenon could have already been anticipated given the form of the $\T_n$'s shown in Fig.~\ref{fig:thetas} and the observation that an increasing index $n$ gives $\T_n$'s with rapidly increasing values.
When a term $\T_n/R^n$ is larger than the previous term in the series the series should be truncated omitting this term.

\begin{figure}
    \centering
    (a) \includegraphics[width=7cm]{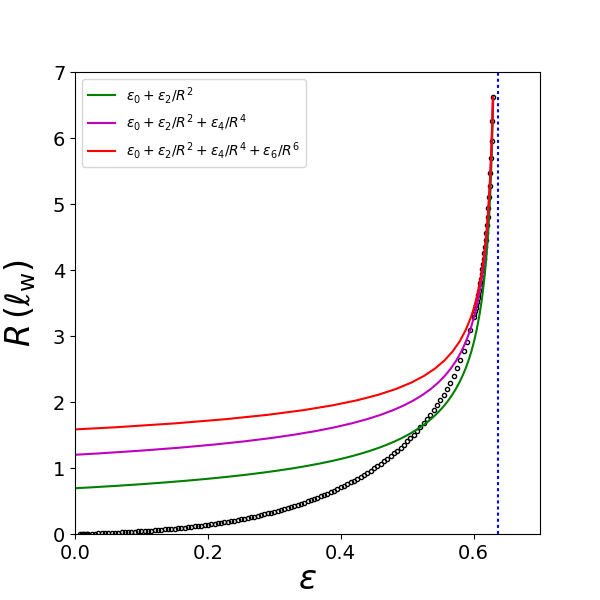} \hspace{10pt}
    (b) \includegraphics[width=7cm]{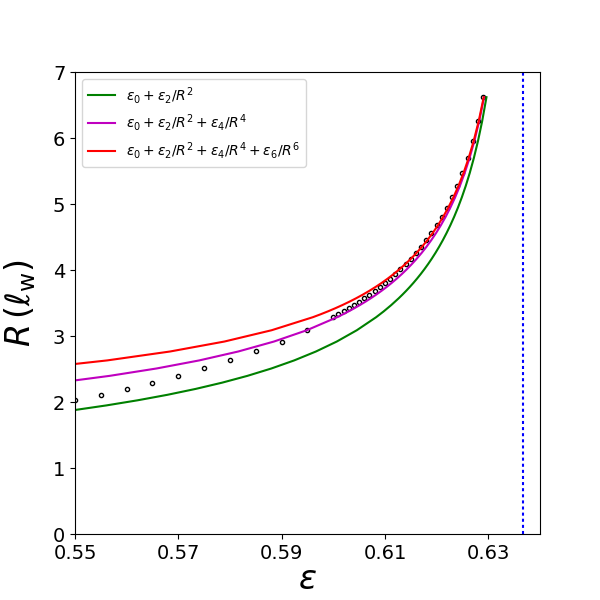}
    \caption{(a) The skyrmion radius $R$ found numerically by solving the original equation \eqref{eq:thetaODE} for various values of $\dmscaled$ is shown by open circles.
Relation \eqref{eq:eps-R_series} is shown by the colored lines to order $O(1/R^2),\, O(1/R^4),\,O(1/R^6)$ as indicated in the legend.
The dotted blue line is an asymptote and marks the critical value $\dmscaled=2/\pi$.
The successive approximations enhance the accuracy as $R$ increases.
(b) A blow-up of the graph for the region of large values of $R$.
}
    \label{fig:eps-R}
\end{figure}

We now calculate the numerical values of $\dmscaled_n$ from Eq.~\eqref{eq:finitenessCondition_explicit}.
We obtain
\begin{equation}  \label{eq:eps-R_series}
    \dmscaled \approx \frac{2}{\pi} - \frac{0.3057}{R^2} - \frac{0.8792}{R^4}  - \frac{5.901}{R^6},\qquad R \gg 1,
\end{equation}
where
\begin{equation}  \label{eq:epsilon0}
     \dmscaled_0=\frac{1}{\pi}\int_{-\infty}^\infty\sech^2\tau\, \dtau = \frac{2}{\pi}
\end{equation}
is obtained analytically while
\begin{equation}  \label{eq:epsilon2}
    \dmscaled_2 \approx -0.3057,\quad\dmscaled_4 \approx -0.8792,\quad
    \dmscaled_6 \approx -5.901
\end{equation}
are found by numerical integration.
Inverting Eq.~\eqref{eq:epsilon-R} we obtain the skyrmion radius versus the parameter $\dmscaled$,
\begin{equation}\label{eq:R(epsilon)_formula}
    R = \frac{|\dmscaled_2|^{1/2}}{\tilde{\dmscaled}^{1/2}} + O\left(\tilde{\dmscaled}^{1/2}\right),\qquad \tilde{\dmscaled} = \frac{2}{\pi}-\dmscaled.
\end{equation}

Fig.~\ref{fig:eps-R} shows by open circles the skyrmion radius extracted from the calculation of the skyrmion profiles by the shooting method.
These data are compared with formula \eqref{eq:eps-R_series} for the successive approximations up to and including order $O(1/R^2),\, O(1/R^4)$, and $O(1/R^6)$.
The approximation is excellent for large $R$ and it is improving as we add higher order terms, as seen in the blow-up in Fig.~\ref{fig:eps-R}b.
For smaller $R$, higher order approximations give larger deviations from the correct result, especially when the term $O(1/R^6)$ is included.
This is a consequence of the increasing error in the asymptotic series for small values of $R$ as was discussed in relation to Fig.~\ref{fig:profile}.

\section{A Pohozaev type identity and an explicit form of $\dmscaled_2$}
\label{sec:Pohozaev}

We multiply Eq.~\eqref{eq:thetaODE} by $2r\T'$ and we integrate over the $r$ axis.
After straightforward algebraic and trigonometric manipulations, we obtain
\begin{equation}
\int_0^\infty\left[\left(r\T'^2\right)'+\Theta'^2-\left(r+\frac{1}{r}\right)(\sin^2\T)'
+ 4\dmscaled (\sin^2 \Theta)\T'\right]\dr = 0.
\end{equation}
The first and last terms under the integral are exact derivatives; they integrate to zero and to $-2\pi\dmscaled$ respectively. 
Performing an integration by parts, we obtain the Pohozaev type identity that is satisfied by all skyrmions and is crucial for our calculation,
\begin{equation}\label{eq:skyrmion_law}
\int_0^\infty
\left[\Theta'^2
+\left(1-\frac{1}{r^2}\right)\sin^2\T\right]\dr 
= 2\pi\dmscaled.
\end{equation}

The following theorem, based on relation \eqref{eq:skyrmion_law}, provides an explicit formula for $\dmscaled_2$ and an alternative derivation of $\dmscaled_0$.

\begin{theorem}
\label{thm:explicit_epsilon2}
The DM parameter $\dmscaled$ satisfies the relation
\begin{equation}
    \dmscaled = \frac{2}{\pi} - \frac{1}{\pi R^2}
   \left( 1 + \hf
    \int_{-\infty}^\infty\T_1g_1 \dT \right) 
    +O\left(R^{-4}\right).
\end{equation}
\begin{proof}
We take the limit $R\to \infty$ in the integral law \eqref{eq:skyrmion_law}.
The integral of $\sin^2\T/r^2$ converges to zero; the integrand decays exponentially in the core and outer region of the skyrmion.
In the limit, we are left with 
\begin{equation}\label{eq:integral_for_epsilon0}
\int_{-\infty}^\infty\left( \Theta_0'^2+\sin^2\T_0
\right) \dT = 2\pi\dmscaled_0.
\end{equation}
The integral in Eq.~\eqref{eq:integral_for_epsilon0} is calculated using Eqs.~\eqref{eq:theta_0_related_quantities} and equals 4.
The critical value $\dmscaled_0=2/\pi$ follows directly.

For the determination of $\dmscaled_2$ we expand the Pohozaev identity \eqref{eq:skyrmion_law} in powers up to $O\left(R^{-2}\right)$.
We insert $\T=\T_0+\tilde\T$, and make $T$ the integration variable.
The manipulation, which involves integrations by parts resulting in significant cancellations, is relegated to \ref{sec:epsilon_calculations}.
We obtain
\begin{equation} \label{eq:epsilon2_integral}
    \dmscaled_2 = -\frac{1}{\pi} \left(1+\hf\int_{-\infty}^\infty\T_1 g_1 \dT \right).
\end{equation}
The value of the latter integral is found in \ref{sec:evaluation_T1g1}.
Inserting the value given in Eq.~\eqref{eq:T1g1_value} into Eq.~\eqref{eq:epsilon2_integral} we find
\begin{equation} \label{eq:epsilon2_value}
\dmscaled_2 = -\frac{0.9605}{\pi} \approx -0.3057
\end{equation}
It agrees with the result in Eq.~\eqref{eq:epsilon2} found by a different method in Sec.~\ref{sec:numerics}.
\end{proof}
\end{theorem}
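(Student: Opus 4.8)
The plan is to read the $\dmscaled$--$R$ relation off the Pohozaev identity \eqref{eq:skyrmion_law}, expanded order by order in $R^{-1}$. I would first change to the domain-wall variable $T=r-R$; since $\der r=\der T$ and every integrand in \eqref{eq:skyrmion_law} decays exponentially away from the wall---in particular $\sin^2\T/r^2$ is exponentially small both in the core, where $\pi-\T=O(e^{-R})$, and in the far field---the limits may be extended to $(-\infty,\infty)$ at the cost of errors beyond all orders in $R^{-1}$. I would then substitute the series \eqref{eq:Theta_series} for $\T$ and the induced series for $\T'$, the Taylor expansion $\sin^2\T=\sin^2\T_0+(\sin2\T_0)\,\tilde\T+(\cos2\T_0)\,\tilde\T^2+\cdots$, the expansion $1/r^2=R^{-2}+O(R^{-3})$, and the series \eqref{eq:epsilon_expansion} for $\dmscaled$, and collect equal powers of $R^{-1}$.

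At order $R^0$ this reproduces \eqref{eq:integral_for_epsilon0}; by \eqref{eq:theta_0_related_quantities} both $\T_0'^2$ and $\sin^2\T_0$ equal $\sech^2T$, so the integral is $4$ and $\dmscaled_0=2/\pi$. At order $R^{-1}$ the integrand is $2\T_0'\T_1'+(\sin2\T_0)\T_1$, an odd function of $T$ by Theorem~\ref{thm:parity} ($\T_0$ odd, $\T_1$ even), hence it integrates to zero, consistent with $\dmscaled_1=0$. The substantive balance is at order $R^{-2}$,
\begin{equation*}
\int_{-\infty}^\infty\left[\T_1'^2+2\T_0'\T_2'+(\sin2\T_0)\T_2+(\cos2\T_0)\T_1^2-\sin^2\T_0\right]\dT=2\pi\dmscaled_2 .
\end{equation*}

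The decisive step would be to eliminate the still-unknown $\T_2$. Integrating $2\T_0'\T_2'$ by parts (the boundary terms vanish by decay) and using the pendulum equation \eqref{eq:pendulum} as $\T_0''=\hf\sin2\T_0$ shows the two $\T_2$-terms cancel exactly, $\int[2\T_0'\T_2'+(\sin2\T_0)\T_2]\,\dT=0$, so that $\dmscaled_2$ depends only on the already-computed $\T_1$. For the remaining terms I would integrate $\T_1'^2$ by parts and invoke the $n=1$ instance of \eqref{eq:Theta_n_ODE}, namely $\T_1''-(\cos2\T_0)\T_1=g_1$, giving $\int[\T_1'^2+(\cos2\T_0)\T_1^2]\,\dT=-\int g_1\T_1\,\dT$. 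With $\int\sin^2\T_0\,\dT=2$ the order-$R^{-2}$ identity collapses to $-\int g_1\T_1\,\dT-2=2\pi\dmscaled_2$, i.e. $\dmscaled_2=-\frac{1}{\pi}\bigl(1+\hf\int_{-\infty}^\infty\T_1 g_1\,\dT\bigr)$, which is the asserted formula.

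I expect the principal difficulty to be bookkeeping rather than concept: keeping the $R^{-2}$ truncation honest requires uniform exponential bounds on the $\T_n$ to justify term-by-term integration and to control the discarded core/far-field pieces and the higher terms of $1/r^2$. The whole argument hinges on the two integrations by parts landing precisely on $\T_0''=\hf\sin2\T_0$ and on the $\T_1$-equation; once the $\T_2$-terms are seen to cancel, the result follows without ever computing $\T_2$. Finally, since Theorem~\ref{thm:parity} forces $\dmscaled_3=0$, the order-$R^{-3}$ balance is identically zero and the remainder is genuinely $O(R^{-4})$; the explicit numerical value then reduces to evaluating $\int\T_1 g_1\,\dT$, which I would defer.
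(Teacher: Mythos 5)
Your proposal is correct and follows essentially the same route as the paper's own proof: expand the Pohozaev identity \eqref{eq:skyrmion_law} in the wall variable $T$, use parity to kill the odd orders, cancel the $\T_2$ terms at order $R^{-2}$ by integrating $2\T_0'\T_2'$ by parts against the pendulum equation $\T_0''=\hf\sin 2\T_0$, and convert $\int\left[\T_1'^2+(\cos 2\T_0)\T_1^2\right]\dT$ into $-\int\T_1 g_1\,\dT$ via the $n=1$ equation \eqref{eq:Theta_n_ODE}. This is precisely the computation the paper carries out in \ref{sec:epsilon_calculations}, differing only in presentation and in deferring the numerical evaluation of $\int\T_1 g_1\,\dT$.
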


In Ref.~\cite{RohartThiaville_PRB2013}, the value $\dmscaled_2=-1/\pi$ was found by an energy minimization argument that takes only $\Theta_0$ into account.

\section{Energy of skyrmion}
\label{sec:energy}

Let us denote by $\Eex, \Ea, \Edm$ the exchange, anisotropy and DM energy terms in the total energy \eqref{eq:energy}.
Using a standard scaling argument \cite{Derrick_JMP1964} one proves that any localized configuration that is a minimum of the energy in an infinitely extended two-dimensional system, such as a skyrmion, satisfies
\begin{equation}  \label{eq:virial2D}
    2\Ea + \Edm = 0.
\end{equation}
In a one dimensional system the same argument gives for minima of the energy, such as a domain wall,
\begin{equation}  \label{eq:virial1D}
    \Eex - \Ea = 0.
\end{equation}
In the limit $\dmscaled \to 2/\pi$, the latter relation is correct to leading order also for a skyrmion, as the leading approximation for the skyrmion profile, in this limit, is $\Theta_0$.
Thus, in the limit, a skyrmion satisfies both \eqref{eq:virial2D} and \eqref{eq:virial1D}, and these are combined to give
\begin{equation}
    \Eex = \Ea = -\frac{\Edm}{2}\qquad \hbox{when}\quad\dmscaled\to \frac{2}{\pi}.
\end{equation}
The total skyrmion energy is
\begin{equation}
    E = \Eex + \Ea + \Edm = 0\qquad \hbox{when}\quad\dmscaled\to\frac{2}{\pi}.
\end{equation}

We give a full asymptotic series for the energy in the following theorem.

\begin{theorem}
All three forms of skyrmion energy in Eq.~\eqref{eq:energy}, exchange, anisotropy and DM, have asymptotic expansions in which only odd powers of $R$ are present,
\begin{equation} \label{eq:EexEaEdm_expansion}
    \Eex = 2\pi R + \frac{{\Eex}_{,1}}{R} + \frac{{\Eex}_{,3}}{R^3} + \cdots,\;
    \Ea = 2\pi R + \frac{{\Ea}_{,1}}{R} + \frac{{\Ea}_{,3}}{R^3} + \cdots,\;
    \Edm = -4\pi R + \frac{{\Edm}_{,1}}{R} + \frac{{\Edm}_{,3}}{R^3} + \cdots,
\end{equation}
The three leading terms sum up to zero; the total energy is
\begin{equation}\label{eq:energy_expansion}
   E \sim \frac{E_1}{R} + \frac{E_3}{R^3}+ \frac{E_5}{R^5}+\cdots, \quad E_1=4\pi^2|\dmscaled_2|, \qquad\quad R\to\infty.
\end{equation}
The skyrmion energy tends to zero, as $\dmscaled$ increases, approaching its critical value, with a rate of convergence
\begin{equation} \label{eq:energy_vs_epsilon}
E \sim \left(4\pi^2|\dmscaled_2|^{1/2}\right)\, \tilde{\dmscaled}^{1/2} + O\left(\tilde{\dmscaled}^{3/2}\right),\quad \tilde{\dmscaled}= \frac{2}{\pi}-\dmscaled.
\end{equation}
\end{theorem}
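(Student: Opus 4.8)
The plan is to promote the parity theorem (Theorem~\ref{thm:parity}) to a single reflection symmetry of the profile and to read off the whole statement from it. Combining $\T_0(-T)=\pi-\T_0(T)$ with $\T_n(-T)=(-1)^{n+1}\T_n(T)$ for $n\ge1$ and with the vanishing of the odd coefficients $\dmscaled_{2i-1}$, one resums \eqref{eq:Theta_series}--\eqref{eq:epsilon_expansion} to obtain, as asymptotic series in $R$,
\begin{equation}
 \T(-T;-R)=\pi-\T(T;R),\qquad \dmscaled(-R)=\dmscaled(R),
\end{equation}
where I display the dependence on the parameter $R$ explicitly. This identity is the engine of the proof.

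First I would rewrite $\Eex,\Ea,\Edm$ from \eqref{eq:energy} in the wall variable $T=r-R$, so the integrals run over $T\in(-R,\infty)$ with measure $r\,\dr=(R+T)\,\dT$. Since $\T'$, $\sin^2\T$ and $\sin2\T$ decay exponentially as $|T|\to\infty$, the integrands are exponentially small at the lower endpoint, and extending the limit $-R$ to $-\infty$ costs only $O(e^{-2R})$, beyond all orders in $1/R$. On the resulting symmetric domain I apply $T\mapsto-T$ together with $R\mapsto-R$: by the identity above $\sin^2\T$ is invariant, $\sin2\T$ changes sign, $\T'$ is unchanged while the measure $(R+T)$ changes sign, and $\dmscaled$ is even. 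A short bookkeeping then gives $\Eex(-R)=-\Eex(R)$, $\Ea(-R)=-\Ea(R)$ and $\Edm(-R)=-\Edm(R)$, so each energy is an odd function of $R$ and its expansion contains only odd powers. This is the first assertion.

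Next I would read off the leading $O(R)$ coefficients by setting $\T\approx\T_0$, $\dmscaled\approx2/\pi$ and keeping the $R$-part of the measure. With \eqref{eq:theta_0_related_quantities} this yields $\Eex,\Ea\sim\pi R\int_{-\infty}^\infty\sech^2T\,\dT=2\pi R$ and $\Edm\sim4\int_{-\infty}^\infty\bigl[(R+T)\T_0'+\hf\sin2\T_0\bigr]\dT=-4\pi R$, the $T$-weighted pieces vanishing by parity. The one-dimensional virial identity $(\T_0')^2=\sin^2\T_0$ forces the exchange and anisotropy densities to agree at leading order, and the three leading terms $2\pi R+2\pi R-4\pi R$ cancel; being odd in $R$ with no $O(R)$ term, $E\sim E_1/R+E_3/R^3+\cdots$.

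It remains to pin down $E_1$. My preferred route is an envelope (Hellmann--Feynman) argument: the skyrmion is a constrained critical point and the DM term is linear in $\dmscaled$, whence $\frac{\der E}{\der\dmscaled}=\Edm/\dmscaled$ along the branch, which by the exact virial relation \eqref{eq:virial2D} equals $-2\Ea/\dmscaled$. Inserting $\Ea\sim2\pi R$, $\dmscaled\to2/\pi$, passing to $\tilde\dmscaled=2/\pi-\dmscaled$ and using the inversion \eqref{eq:R(epsilon)_formula}, $R\sim|\dmscaled_2|^{1/2}\tilde\dmscaled^{-1/2}$, gives $\frac{\der E}{\der\tilde\dmscaled}\sim2\pi^2|\dmscaled_2|^{1/2}\tilde\dmscaled^{-1/2}$, which integrates from $E=0$ at $\tilde\dmscaled=0$ to \eqref{eq:energy_vs_epsilon}; comparison with $E\sim E_1/R$ then gives $E_1=4\pi^2|\dmscaled_2|$. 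Equivalently, and this is presumably the appendix computation, one expands $E=\Eex-\Ea$ (again via \eqref{eq:virial2D}) to order $1/R$: writing $P=\T'^2-\sin^2\T=\sum_k P_k/R^k$, the leading coefficient $P_0$ vanishes by the virial identity, the $O(R^0)$ contribution drops by parity, and one is left with $E_1=\pi\bigl[2+\int_{-\infty}^\infty(TP_1+P_2)\,\dT\bigr]$. The main obstacle is to show that this explicit integral equals $4\pi^2|\dmscaled_2|$: the bridge is Theorem~\ref{thm:explicit_epsilon2}, which expresses $\dmscaled_2$ through $\int_{-\infty}^\infty\T_1 g_1\,\dT$, and reaching it requires repeated integration by parts together with the hierarchy \eqref{eq:Theta_n_ODE} and the solvability conditions \eqref{eq:finitenessCondition}. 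The envelope argument avoids this grind at the price of assuming differentiability of the solution branch in $\dmscaled$.
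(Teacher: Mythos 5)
Your proposal is correct, and its two halves compare differently with the paper. The first half (only odd powers of $R$, and cancellation of the three $O(R)$ terms) is the paper's own argument in a tidier wrapping: the paper classifies each term $f(T)R^n$ of the energy density by the parities of $n$ and $f$ and discards, upon integration over a symmetric range, the classes with odd $f$; you instead resum Theorem~\ref{thm:parity} into the reflection identity $\T(-T;-R)=\pi-\T(T;R)$, $\dmscaled(-R)=\dmscaled(R)$ and let the substitution $T\mapsto-T$, $R\mapsto-R$ do the same bookkeeping. Both versions rest on the parity theorem and on extending the lower limit $-R$ to $-\infty$ at beyond-all-orders cost, and your leading-order evaluation $(2\pi R)+(2\pi R)+(-4\pi R)=0$ coincides with the paper's Eq.~\eqref{eq:energy0}. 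Where you genuinely depart from the paper is in obtaining $E_1$ and \eqref{eq:energy_vs_epsilon}. The paper proceeds directly (see \ref{sec:energyCalculation}): it expands the integrand of \eqref{eq:energy} to order $R^{-1}$, integrates by parts using \eqref{eq:pendulum} and the hierarchy \eqref{eq:Theta_n_ODE} until only $\hf\int\T_1 g_1\,\dT$ and explicit constants survive, and identifies $E_1=4\pi^2|\dmscaled_2|$ through \eqref{eq:epsilon2_integral}; the rate \eqref{eq:energy_vs_epsilon} then follows by substituting \eqref{eq:R(epsilon)_formula}. You reverse the logic: the envelope (Hellmann--Feynman) identity $\der E/\der\dmscaled=\Edm/\dmscaled$, the Derrick relation \eqref{eq:virial2D}, and the leading term $\Ea\sim2\pi R$ give $\der E/\der\tilde{\dmscaled}\sim2\pi^2 R\sim2\pi^2|\dmscaled_2|^{1/2}\tilde{\dmscaled}^{-1/2}$, which integrates, from the limit $E\to0$ that your first half supplies, to \eqref{eq:energy_vs_epsilon}; comparison with $E\sim E_1/R$ then yields $E_1=4\pi^2|\dmscaled_2|$. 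Your route buys a conceptual reason for the coincidence $E_1=4\pi^2|\dmscaled_2|$ (it is forced by consistency of the series for $E$ and $\dmscaled$ along the branch of critical points) and bypasses the appendix computation entirely; its price, which you correctly flag, is the additional assumption that the branch $\dmscaled\mapsto\T$ is differentiable so that the envelope theorem applies, an assumption the paper's purely formal manipulation of asymptotic series never needs. Your sketched second route, $E=\Eex-\Ea$ via \eqref{eq:virial2D} with $P=\T'^2-\sin^2\T$, is indeed the skeleton of the paper's appendix calculation, but since you leave its final identification to Theorem~\ref{thm:explicit_epsilon2}, it is the envelope argument that carries your proof --- and it does.
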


\begin{proof}
We classify the functions of the form $f(T)R^n$, where the function $f$ can be odd or even and the power $n$ can be any integer, according to the following table. 

\begin{center}
 \begin{tabular}{||c c c ||} 
 \hline
 Class & $n$  & $f$   \\ [0.5ex] 
 \hline\hline
 A & odd & even  \\ 
 \hline
 B & even & odd  \\
 \hline
 C & odd & odd  \\
 \hline
 D & even & even  \\ [1ex] 
 \hline
\end{tabular}
\end{center}
It suffices to prove that every term of the expression under the energy integral \eqref{eq:energy} belongs to class $A+B+C$. The integration will eliminate the terms in $B$ and $C$; only odd powers of $R$ will participate in the expression of each of the three components of the energy and, hence, in the total energy.

In order to simplify the notation, we use the symbol of the class to also denote the class elements as well as sums of the class elements. Thus, 
\begin{equation} 
\T_0 +\frac{\T_2}{R^2}+\frac{\T_4}{R^4}+\cdots\equiv B, \ \ \ \ \
 \frac{\T_1}{R}+\frac{\T_3}{R^3}+ \frac{\T_5}{R^5}+\cdots\equiv A, \ \ \ \ \
T\equiv B, \ \ \ \ \ \ R\equiv A. 
\end{equation}
We notice that 
\begin{equation}
     A^2\equiv B^2\equiv  C^2\equiv D^2\equiv D,  \ \ \  AB\equiv C, \ \  \ AC=B, \ \ \ BC\equiv A, \ \ \ AD\equiv A, \ \ \ BD\equiv B, \ \ \  A'\equiv C,\ \  \ B'\equiv D.
\end{equation}

We also have 
\begin{equation}
    \sin A\equiv A, \ \ \ \  \sin B\equiv B,\ \ \ \  \cos A\equiv \cos B\equiv D.
\end{equation}
By applying trigonometric identities, we obtain 
\begin{equation}
 \sin(A+B)\equiv AD+BD\equiv A+B, \ \ \ \ \cos(A+B)\equiv DD+AB\equiv D+C.
\end{equation}
We perform the calculation term by term,
expressing $\sin^2\T$ as $\hf\left(1-\cos 2\T\right)$. 
\begin{enumerate} \itemsep4pt
    \item Term $r\left(\frac{\der\T}{\dr}\right)^2 \equiv(R+T)(A+B)^2
\equiv (A+B)(C^2+CD+D^2)\equiv (A+B)(D+C)\equiv A+B$.
\item Term $r\frac{\der\T}{\dr}\equiv (A+B)(C+D)\equiv A+B. $
\item Term $\sin2\T\equiv\sin(A+B)\equiv\sin(A+B)\equiv A+B.$
\item Term $r+\frac{1}{r}\equiv A+B$ \ \ (see Eq.~\eqref{eq:radius_expansion}).
\item Term $\left(r+\frac{1}{r}\right)\cos2\T\equiv (A+B)(D+C)\equiv A+B$.
\end{enumerate}
\smallskip
This shows that only odd powers of $R$ are present.

The highest order term in the energy is $O(R)$ with the contributions of the three energy terms (exchange, anisotropy and DM) given by the corresponding three terms in the integral
\begin{equation}\label{eq:energy0}
2\pi R \int_{-\infty}^\infty \left\{\frac{1}{2}\left(\frac{\der\T_0}{\dT}\right)^2
 +\frac{1}{2}\sin^2\T_0
+\frac{2}{\pi}\left(\frac{\der\T_0}{\dT}
\right)\right\} \dT.
\end{equation}
Inserting the relevant quantities from Eqs.~\eqref{eq:theta_0_related_quantities}, we obtain the $O(R)$ terms in Eq.~\eqref{eq:EexEaEdm_expansion} for the individual energies.
The $O(R)$ term in the total energy vanishes.
The term $O(R^{-1})$ is calculated in \ref{sec:energyCalculation}, giving the result of Eq.~\eqref{eq:energy_expansion}.
Inserting Eq.~\eqref{eq:R(epsilon)_formula} into Eq.~\eqref{eq:energy_expansion}, we obtain Eq.~\eqref{eq:energy_vs_epsilon}.
\end{proof}

\section*{Acknowledgement}
We are grateful to Stefan Bl\"ugel and to Alex Bogdanov for fruitful discussions.
CM gratefully acknowledges financial support by the DFG under the grant no. ME 2273/3-1, and SK a Mercator fellowship as part of the previous grant.
SV gratefully acknowledges financial support by the NSF through contract DMS-1211638.
SK acknowledges funding from the Hellenic Foundation for Research and Innovation (HFRI) and the General Secretariat for Research and Technology (GSRT), under grant agreement No 871.

\appendix

\section{The skyrmion core and the outer region}
\label{sec:core_outerRegin_app}

We derive the leading asymptotic behaviors of $\T$ in the skyrmion core ($-R<T<0, \ |T|\gg1$) and in the outer region ($T>0, \ |T|\gg1$) and match them with the expressions of $\T$ obtained for the domain wall. Matching occurs on the overlap layers  $1\ll|T|\ll R$, with $T<0$ on the left and $T>0$ on the right. The leading asymptotic on the overlap layers obtained from the domain wall's  Eq.~\eqref{eq:pendulum_solution} is 
\begin{equation}  \label{eq:Theta_leadingAsymptotic}
    \T\sim
    \begin{cases} \pi-2e^{-|T|},  \ \ \ T<0 \\ \\
    2e^{-|T|}, \qquad T>0.\end{cases}.
\end{equation}

\subsection*{The skyrmion core (Region A)}

In the  spatial region from $r=0$ and up to the domain wall, we have 
\begin{equation}
    \t(r)=\pi-\T(r)\ll1, \ \ \ \ \ \ \sin\t\sim\t.
\end{equation}
The DM term in Eq.~\eqref{eq:thetaODE} is clearly subdominant ($\theta^2\ll \theta$) and is neglected.
In terms of $\t$, Eq.~\eqref{eq:thetaODE} becomes the modified Bessel equation
\begin{equation}
 \label{eq:thetaODE_before_domain_wall}
r^2\theta''+r\theta'-(r^2+1)\theta=0, \qquad \theta(0)=0.   
\end{equation}
We will use its series solution $I_1(r)$ which equals zero at $r=0$ (this is the modified Bessel function of the first kind, see \cite{AbramowitzStegun}, par. 9.6.10),
\begin{equation}\label{eq:Theta_solution_region_A}
    \t = C_1 I_1(r), \ \ \ \ I_1(r)= \frac{r}{2}\sum_{n=0}^\infty
    \frac{\left(\frac{1}{4}r^2\right)^{n}}{n!(n+1)!},
\end{equation}
where $C_1$ is a constant.
The large $r$ behavior of the function $I_1(r)$ is 
(see \cite{AbramowitzStegun}, par. 9.7.1)
\begin{equation}\label{eq:I1_asymptotic}
 I_1(r)\sim  \frac{e^r}{\sqrt{2\pi r}}\left(1-\frac{3}{8r} +\cdots\right).
\end{equation}
The constant $C_1$ is now evaluated by identifying the  leading asymptotic deviation from $\pi$ of the angle $\T$ on the overlapping layer of the region A and the domain wall,
\begin{equation}\label{eq:Theta_core_and_wall}
\T(r)\sim
\begin{cases} \pi-C_1 \frac{e^r}{\sqrt{2\pi r}}
\sim\pi-C_1\frac{e^R}{\sqrt{2\pi R}}e^{-|T|}, \ \ \ T<0,\, |T| \gg 1\; \text{(Region A, Modified Bessel)}, \\ \\
\pi-2e^{-|T|}, \qquad\qquad\qquad\qquad\quad |T|\ll R\; \text{(Domain wall)}.
\end{cases}
\end{equation}
We obtain
\begin{equation}\label{eq:C_1_value}
    C_1 = e^{-R}\sqrt{8\pi R}. 
\end{equation}
The slope of the skyrmion profile at its center $r=0$ is
\begin{equation}  \label{Theta_slope_at_origin}
    \frac{d\T}{dr}(r=0)= -e^{-R}\sqrt{2\pi R}. 
\end{equation}

\subsection*{The skyrmion outer region (region B)}

As in region A, the DM term may be neglected. The angle $\T(r)$ then satisfies the modified Bessel equation \eqref{eq:thetaODE_before_domain_wall}.
The appropriate solution is $K_1(r)$ (modified Bessel function of the second kind) which decays as $r\to\infty$. Since $r\gg1$ in this region, we only need the asymptotic behavior of $K_1(r)$ (see \cite{AbramowitzStegun}, par. 9.7.2)
\begin{equation}\label{eq:Theta_asymptotic_region_B}
\T=C_2 K_1(r),\qquad  K_1(r)\sim \sqrt{\frac{\pi}{2r}}e^{-r}\left(1-\frac{3}{8r}
 +\cdots\right),
\end{equation}
where $C_2$ is a constant.

The constant $C_2$ is  evaluated by identifying the  leading behavior of the angle $\T$ on the overlapping layer of the region B and the domain wall,
\begin{equation}
\T(r)\sim
\begin{cases} 
C_2\sqrt{\frac{\pi}{2r}}e^{-r} \sim C_2\sqrt{\frac{\pi}{2R}}e^{-R}e^{-|T|}\,, \ \ \ T>0,\, T\gg 1\; \text{(Region $B$, Modified Bessel)}, \\ \\
2e^{-|T|}\,, \qquad\qquad\qquad\qquad\qquad T\ll R\; \text{(Domain wall)}.
\end{cases}
\end{equation}

Matching the leading order terms of the formulae for $\T$ in the overlapping region gives 
\begin{equation}\label{eq:C_2_value}
C_2=\sqrt{\frac{8R}{\pi}}e^R.
\end{equation}

\section{Parity theorem}
\label{sec:parity}

Proceeding to the calculation at higher orders of $1/R$ we determine the formula for $\tilde{g}$ utilizing Eqs.~\eqref{eq:radius_expansion}, \eqref{eq:CS}.
We obtain
\begin{equation}  \label{eq:thetaODE_expanded_form_3}
\begin{split}
\tilde{g} & =
-p\frac{\Theta_0'}{T}
-p\dmscaled\frac{1-\cos(2\Theta_0)}{T}
+p^2\frac{\sin(2\Theta_0) }{2T^2}
-p\frac{\tilde\Theta'}{T}
+\left(-p\dmscaled\frac{\sin(2\T_0)}{T} + p^2\frac{\cos(2\Theta_0)}{2T^2} \right) 2\tilde{\T}  \\
 & +\left(\frac{1}{2}\sin(2\Theta_0) + p\dmscaled\frac{\cos(2\Theta_0)}{T} + p^2\frac{\sin(2\Theta_0)}{2T^2} \right) C(2\tilde\T)
 +\left(\frac{1}{2}\cos(2\Theta_0) - p\dmscaled\frac{\sin(2\T_0)}{T} + p^2\frac{\cos(2\Theta_0)}{2T^2} \right) S(2\tilde\T).
\end{split}
\end{equation}
The following theorem is instrumental for all calculations based on Eq.~\eqref{eq:Theta_n_ODE}, and in particular for those presented in Sec.~\ref{sec:higherOrder}.
The hypotheses of the theorem turn out to be necessary conditions for the existence of bounded solutions $\T_n$.

\begin{theorem*}
Let $\dmscaled_{2i-1}=0$ for $i=1,2,3,\cdots$.
Then, the following parity conditions hold.
\begin{enumerate}
   \item For all $n\ge1$, the functions $g_n=g_n(T)$ are even if $n$ is odd and they are odd if $n$ is even.
   \item The same is true for the functions $\T_n=\T_n(T)$.
\end{enumerate}
\end{theorem*}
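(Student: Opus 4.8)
The plan is to run a single strong induction on $n$ that alternates between two logically independent devices: (i) reading off the parity of the forcing term $g_n$ from the explicit expansion \eqref{eq:thetaODE_expanded_form_3}, and (ii) propagating that parity to $\T_n$ through the variation-of-constants formula \eqref{eq:variation_of_constants}. I would first record the elementary parities that drive everything. From \eqref{eq:pendulum_solution} one has $\T_0(-T)=\pi-\T_0(T)$, so $\T_0-\hf\pi$ is odd; hence by \eqref{eq:theta_0_related_quantities} the quantities $\T_0'=-\sech T$, $\sin^2\T_0=\sech^2T$ and $\cos2\T_0=1-2\sech^2T$ are even, while $\sin2\T_0=2\sech T\tanh T$ is odd. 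The basis solutions \eqref{eq:H_1_H_2} satisfy $H_1=\sech T$ even and $H_2=\sinh T+T\sech T$ odd. Finally, under the hypothesis $\dmscaled_{2i-1}=0$ the series \eqref{eq:epsilon_expansion} retains only even powers of $R^{-1}$ with constant coefficients.

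For the transfer step (ii) I would prove the lemma: if $g_n$ has a definite parity, then $\T_n$ has the same parity. Writing $\T_n$ from \eqref{eq:variation_of_constants} as a sum of two terms, I would use that $\int_0^T(\cdot)\,\dtau$ of an odd (resp. even) integrand is even (resp. odd), together with $H_1$ even and $H_2$ odd, to see that the first term always carries the parity of $g_n$. For the second term I would invoke the solvability condition \eqref{eq:finitenessCondition}: when $g_n$ is even the integrand $g_nH_1$ is even with vanishing total integral, which forces $J(T)=\int_{-\infty}^T g_nH_1\,\dtau$ to be odd, so $H_2J$ is even; when $g_n$ is odd the integrand is odd and $J$ is automatically even, so $H_2J$ is odd. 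Either way $\T_n$ inherits the parity of $g_n$. Note that for even $n$ the solvability condition holds automatically ($g_nH_1$ being then odd), whereas for odd $n$ it is relation \eqref{eq:finitenessCondition_explicit} that fixes the even-indexed coefficient $\dmscaled_{n-1}$, whose actual value never affects any parity.

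The heart is step (i), for which I would adopt the parity-class bookkeeping of Section~\ref{sec:energy}: a monomial $f(T)R^{m}$ is labelled A, B, C or D by the parity of $m$ and of $f$, with the purely formal multiplication table recorded there. I would classify the ingredients of \eqref{eq:thetaODE_expanded_form_3}: expanding in $R^{-1}$ gives $1/r=1/(R+T)\in A+B$ and $1/r^2\in C+D$; from the elementary parities $\T_0'\in D$, $\sin2\T_0\in B$, $\cos2\T_0\in D$, and $\dmscaled\in D$, this last membership being exactly where the hypothesis $\dmscaled_{2i-1}=0$ enters. Granting the inductive hypothesis that $\T_l$ has the asserted parity for every $l<n$, the relevant truncation of $\tilde\T$ lies in $A+B$, whence $\tilde\T'\in C+D$; since $C(2\tilde\T)$ is a series in even powers of $2\tilde\T$ one gets $C(2\tilde\T)\in C+D$, and the odd-power remainder gives $S(2\tilde\T)\in A+B$. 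Feeding these classes into \eqref{eq:thetaODE_expanded_form_3} and applying the table, I would verify that each of its seven terms lands in $A+B$; reading off the coefficient of $R^{-n}$ then shows $g_n$ is even for odd $n$ and odd for even $n$, which is part (1). The induction is well founded because every occurrence of $\tilde\T$ in \eqref{eq:thetaODE_expanded_form_3} carries at least one factor of order $R^{-1}$, so the $R^{-n}$ coefficient involves only $\T_l$ with $l\le n-1$.

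The main obstacle is precisely the bookkeeping around $\dmscaled$. The term of $g_n$ proportional to $\dmscaled_{n-1}$, which enters against the even function $-2\sin^2\T_0$, has the wrong parity for even $n$; it is the hypothesis $\dmscaled_{2i-1}=0$ that deletes this term and restores oddness, equivalently it is what guarantees $\dmscaled\in D$ rather than $\dmscaled\in A+D$. The remaining delicate point is to confirm that the nonlinear blocks $C(2\tilde\T)$ and $S(2\tilde\T)$ respect the class structure to all orders, for which I would establish once and for all that $(A+B)^{2j}\subseteq C+D$ and $(A+B)^{2j+1}\subseteq A+B$.
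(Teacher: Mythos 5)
Your proposal is correct, and its skeleton coincides with the paper's: a strong induction in which the parity of $g_n$ is read off term by term from \eqref{eq:thetaODE_expanded_form_3} and then transferred to $\T_n$ through \eqref{eq:variation_of_constants} together with the solvability condition \eqref{eq:finitenessCondition}. The genuine difference is in how the hard part --- the nonlinear blocks $C(2\tilde\T)$ and $S(2\tilde\T)$ --- is handled. The paper proves a standalone combinatorial Claim: for a monomial $\tilde\T^{\mathbf{k}}=(2\delta)^{\mathbf{k}\cdot\mathbf{q}}\T_1^{k_1}\cdots\T_{n-1}^{k_{n-1}}$ it compares three parities ($N_{\rm even}$, of the power of $\delta$; $N_{\rm odd}$, of the product as a function of $T$; $N_{\rm total}$, of the number of factors counted with multiplicity) and exploits $N_{\rm even}+N_{\rm odd}\equiv N_{\rm total}\pmod 2$, with $N_{\rm total}\equiv 0$ for $C$ and $\equiv 1$ for $S$. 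You instead import the A/B/C/D parity-class calculus that the paper introduces only later, in the proof of the energy theorem of Section~\ref{sec:energy}, and reduce the Claim to the closure relations $(A+B)^{2j}\subseteq C+D$ and $(A+B)^{2j+1}\subseteq A+B$, which follow by induction from the multiplication table. The two devices encode exactly the same mod-2 arithmetic, but your version buys a unified formalism serving both appendix proofs (parity and energy), whereas the paper's counting argument is more explicit about precisely which monomials arise. A second point in your favor: your transfer step is cleaner than the paper's. The paper's phrase ``the last term is odd and exponentially increasing therefore $c_{n,2}$ must be set to zero'' invokes a constant that does not appear in \eqref{eq:variation_of_constants} as written (implicitly it is $\hf\int_{-\infty}^0 g_nH_1\,\dtau$ multiplying $H_2$); your argument --- that the solvability condition forces $J(T)=\int_{-\infty}^T g_nH_1\,\dtau$ to be odd when $g_n$ is even, while $J$ is automatically even when $g_n$ is odd, so that $H_2J$ always carries the parity of $g_n$ --- reaches the same dichotomy without that ambiguity, at the cost of making the role of boundedness (the reason the growing solution must be excluded) implicit rather than explicit.
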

{\it Proof.}
When $g_n$ is odd, it follows directly from Eq.~\eqref{eq:variation_of_constants} that $\Theta_n$ is also odd.
When $g_n$ is even, the first two terms on the right side of Eq.~\eqref{eq:variation_of_constants} are even; the last term is odd and exponentially increasing therefore $c_{n,2}$ must be set to zero.
As a result $\Theta_n$ is even.
Therefore, it suffices to prove the theorem for the $g_n$.
We already know that the theorem is true for $n=1$.

Clearly, only terms $\T_j$ with $j<n$ appear in the expression for $g_n$, so we can truncate $\tilde\T$ accordingly.
We make the inductive assumption that the functions $g_1,g_2,g_3,\ldots g_{n-1}$ and hence the functions $\T_1,\T_2,\T_3,\ldots \T_{n-1}$ alternate in parity, with $g_1$ and $\T_1$ being even functions of $T$.
We prove that the theorem is then true for  $g_1,g_2,g_3,\ldots g_{n}$ and hence for $\T_1,\T_2,\T_3,\ldots \T_{n}$.
We recall that $\T_0$ is an odd function of $T$.
We define for convenience
\[
\delta = \frac{1}{R}
\]
and we observe that $p$ is a power series of $(\delta T)$.

We show that $g_n$ satisfies the parity condition term by term.
The condition is true for the first term, namely, $p\T_0'/T$.
Indeed, $p$ must be represented by $(\delta T)^n$ for the term to be of order $\delta^n$.
The term becomes $T^{n-1}\T_0'$, which satisfies the parity condition since $\T_0'$ is even.
In a similar way, the parity law is satisfied for the remaining two terms in which no $\T_j$ appears.
The verification for the terms in which only one $\T_j$ appears is equally straightforward. For example, if in the term before the first parenthesis, $p$ is represented by $(\delta T)^m$, the term is $T^{m-1}\T_{n-m}'\delta^n$.
The parity requirement is clearly satisfied for $m=1$, since taking the derivative changes the parity.
Increasing the value of $m$ by $k$ introduces $k$ factors $T$ and also shifts the index of $\T$ backwards
by $k$ positions.
According to our inductive assumption, the parity of the term is preserved.

More work is required to show that the parity condition holds for the terms that have the factor $C(\tilde\T)$ or $S(\tilde\T)$. 
According to the inductive assumption, the truncated
\begin{equation}
   \tilde\T=\delta\T_1+\delta^2\T_2+\delta^3\T_3+\ldots +\delta^{n-1}\T_{n-1}
\end{equation}
has even functions of $T$ multiplying the odd powers of $\delta$ and odd functions of $T$ multiplying the even powers of $\delta$. 
The general term of the expansions of $C(\tilde\T)$ and $S(\tilde\T)$ is represented by
\begin{equation}
  \tilde\T^{\textbf{k}}=(2\delta)^{\bf k\cdot q}\T_1^{k_1}\T_2^{k_2}\cdots\T_{n-1}^{k_{n-1}},
\end{equation}
where
${\textbf{k}}=(k_1,k_2,k_3,\cdots k_{n-1})$ with $k_i\in\{0,1,2,3,\cdots\}$, and
where ${\textbf{q}}=(1,2,3,\cdots, n-1)$.
The term $\tilde\T^{\textbf{k}}$ is either even or odd, since the factors $\T_j$ are even or odd.

\underbar{Claim}.
\begin{enumerate}
{\it  \item[(a)] All the terms in the expansion of $C(\tilde\T)$ are even at even powers of $\delta$ and odd at odd powers of $\delta$.
\item[(b)] All the terms in the expansion of $S(\tilde\T)$ are odd at even powers of $\delta$ and even at odd powers of $\delta$.}
\end{enumerate}

In order to prove the claim, we utilize the notion of the parity of a number or a function.
The parity equals zero in the case of evenness and it equals unity in the case of oddness of the number or function.
We calculate the following three parities.
\begin{enumerate}
   \item[(i)] The parity of the exponent $n$ in the order $\delta^n$ of a term $\tilde\T^{\bf k}$.
   \item[(ii)] The parity of the product $\T_1^{k_1}\T_2^{k_2}\cdots\T_{n-1}^{k_{n-1}}$ in  $\tilde\T^{\bf k}$.
  \item[(iii)] The parity of the number of the factors  $\T_j$ in $\tilde\T^{\bf k}$, counting multiplicities. 
  \end{enumerate}
We obtain the following. 
\begin{enumerate}
    \item[(i)] The exponent of $\delta$ equals $n={\bf k\cdot q}$. For the calculation of the parity of $n$, we set $k_jq_j=0$ if $q_j$ is even (eliminates all the odd factors $\T_j$) or if $k_j$ is even (eliminates all factors $\T_j$ with even multiplicity). 
    We are thus, keeping only the even factors with odd multiplicity.
    The parity is thus
    \begin{equation}
        N_{even}\equiv \text{\# of factors $\T_j$ that are even functions with odd multiplicity mod(2)}.
    \end{equation} 
    \item[(ii)] The second parity, which we denote by $N_{odd}$, equals 
   \begin{equation}
        N_{odd}\equiv \text{\# of odd factors with odd  multiplicity mod(2)}.
   \end{equation}  
   \item[(iii)] The parity of the number of the factors $\T_j$ in $\tilde\T^{\bf k}$, counting multiplicities is given by the sum $k_1+k_2+\cdots +k_{n-1}$. It is an even number for the terms of  $C(\tilde\T)$ and an odd number for the terms of  $S(\tilde\T)$. 
   The third parity, which we denote by $N_{total}$, equals 
   \begin{equation}
       N_{total}\equiv \text{\# of all factors of odd multiplicity mod(2).}
   \end{equation}  
   \end{enumerate}
Clearly, $N_{even}+N_{odd}\equiv N_{total} \mod(2)$. 
Hence, $N_{total}\equiv{0}$ in the case of $C(\tilde\T)$ and $ N_{total}\equiv{1}$ for $S(\tilde\T)$.
Parities 1 and 2 agree with each other in the terms of $C(\tilde\T)$ and differ from each other in the terms of $S(\tilde\T)$.
This proves the claim.

Now that the parity of the terms of $C$ and $S$ are understood, the correctness of the theorem for the terms involving these is verified similarly to the previous terms.

\section{Detailed calculations for $\dmscaled_2$ and for the energy expansion}
\label{sec:calculations}

\subsection{A formula for $\dmscaled_2$}
\label{sec:epsilon_calculations}

For the determination of $\dmscaled_2$ we expand the identity \eqref{eq:skyrmion_law} in powers up to $O\left(R^{-2}\right)$.
Inserting $\T=\T_0+\tilde\T$, and passing to $T$ as the integration variable, we obtain
\begin{equation}\label{eq:law_2}
\int_{-\infty}^\infty
\left[\left(\Theta_0'+\tilde\T'\right)^2 + \left(1-\frac{1}{R^2}\right)\sin^2\left(\Theta_0 + \tilde\T\right)\right] \dT 
= 2\pi\left(\dmscaled_0 + \frac{\dmscaled_2}{R^2}\right) + O\left(R^{-4}\right).
\end{equation}
We calculate
\begin{equation}\label{eq:Theta_sum_prime_squared}
 \left(\Theta_0'+\tilde\T'\right)^2
 =\Theta_0'^2+\frac{1}{R} 2\T_0'\T_1' + \frac{1}{R^2} \left( 2\T_0'\T_2' + \T_1'^2 \right) + O\left(R^{-3}\right)
\end{equation}
and
\begin{equation}\label{eq:sine_sum_sum_squared}
\begin{aligned}
 \sin^2\left(\Theta_0 + \tilde\T\right)
 &=\left(\sin\tilde\T\cos\T_0
 +\cos\tilde\T\sin\T_0\right)^2\\
& = \sin^2\T_0 + \frac{1}{R}\, \T_1\sin 2\T_0
+\frac{1}{R^2} \left( \T_1^2 \cos2\T_0 + \T_2\sin 2\T_0 \right) + O\left(R^{-3}\right).
\end{aligned}
\end{equation}
We insert these results into Eq.~\eqref{eq:law_2}.
The terms $O(1)$ cancel due to Eq.~\eqref{eq:integral_for_epsilon0}.
The $O\left(R^{-1}\right)$ terms are odd and vanish upon integration.
The terms $O(R^{-3})$ vanish for the same reason.
The $O\left(R^{-2}\right)$ terms give 
\begin{equation}
 \int_{-\infty}^\infty \left(
 2\T_0'\T_2'+\T_2\sin2\T_0+\T_1'^2+\T_1^2\cos2\T_0-\sin^2\T_0\right)\dT
 =2\pi\dmscaled_2.
\end{equation}
We perform an integration by parts in the first and third terms,
\begin{equation}
 \int_{-\infty}^\infty \left[
 \T_2(\sin2\T_0-2\T_0'') - \T_1(\T_1''-\T_1\cos2\T_0)-\sin^2\T_0\right] \dT
 =2\pi\dmscaled_2.
\end{equation}
The first parenthesis vanishes due to Eq.~\eqref{eq:pendulum} and the second one is equal to $g_1$ due to Eq.~\eqref{eq:Theta_n_ODE}. 
We obtain 
\begin{equation}
- \int_{-\infty}^\infty \left( \T_1g_1+\sin^2\T_0\right)\dT
 = 2\pi\dmscaled_2.
\end{equation}
We integrate the second term in the left side by using Eq.~\eqref{eq:theta_0_related_quantities} and obtain
\begin{equation} \label{eq:epsilon2_integral_app}
    \dmscaled_2 = -\frac{1}{\pi} \left(1+\hf\int_{-\infty}^\infty\T_1 g_1 \dT \right)
\end{equation}
proving a result in Theorem~\ref{thm:explicit_epsilon2}.
The value of the latter integral is found in \ref{sec:evaluation_T1g1}.

\subsection{A formula for the leading order energy}
\label{sec:energyCalculation}

We address the exchange, anisotropy and DM terms under the energy integral \eqref{eq:energy} separately.
The arrows below indicate that only terms $O\left(R^{-1}\right)$ are taken.
They also allow replacement of a term by an equal quantity, omitting terms that integrate to zero, and  operations corresponding to integration by parts, for example $\T_1'^2$ being replaced by $-\T_1''\T_1$.

\medskip
\begin{itemize}
\item Exchange:
\begin{equation}
\begin{aligned}
     \hf(R+T)\T'^2+\hf\frac{\sin^2\T}{R+T} : \ \  &\mapsto\hf\T_1'^2+\T_0'\T_2'+T\T_0'\T_1' +\hf\sin^2\T_0 \\
&\mapsto -\hf\T_1''\T_1-\T_0''\T_2-T\T_0''\T_1 -\T_0'\T_1+\hf\sin^2\T_0.
\end{aligned}
\end{equation}

\item
Anisotropy:
\begin{equation}
 \hf(R+T)\sin^2\T: \ \ \mapsto \hf\T_1^2\cos2\T_0
 +\hf\T_2\sin2\T_0+\hf T\T_1\sin2\T_0.
\end{equation}

\item
DM:
\begin{equation}
\begin{aligned}
  (R+T)\left(\dmscaled_0\T'+\frac{\dmscaled_2}{R^2}\T'\right)+\hf\dmscaled_0 \sin2\T: \ \ &\mapsto \dmscaled_0\T_2' + \dmscaled_0 T\T_1' + \dmscaled_2\T_0' + \dmscaled_0\,\T_1\cos2\T_0 \\
  &\mapsto - \dmscaled_0 \T_1 + \dmscaled_2\T_0' + \dmscaled_0\,\T_1\cos2\T_0 \\
  &\mapsto \dmscaled_2\T_0' - 2\dmscaled_0\,\T_1\sin^2\T_0.
 \end{aligned}
\end{equation}
\end{itemize}

The second and third terms in the anisotropy cancel one by one the corresponding exchange terms in view of Eq.~\eqref{eq:pendulum}.
The first term in the anisotropy combines with the first term in the exchange to give $-\hf\T_1 g_1$ in view of \eqref{eq:Theta_n_ODE}.
We are left with
\[
\begin{aligned}
& -\hf\T_1 g_1 - (\T_0'+2\dmscaled_0 \sin^2\T_0)\T_1 + \dmscaled_2\T_0' + \hf\sin^2\T_0 \\
\mapsto & \hf\T_1 g_1 + \dmscaled_2\T_0'+ \hf\sin^2\T_0.
\end{aligned}
\]
We take the integral
\[
\int_{-\infty}^\infty \left( \hf\T_1 g_1  + \dmscaled_2\T_0'+ \hf\sin^2\T_0\right) \dT
= \hf \int_{-\infty}^\infty \T_1 g_1\,\dT - \dmscaled_2\pi + 1
\]
and use \eqref{eq:epsilon2_integral} to obtain
\begin{equation} \label{eq:E1}
  E_1 = 4\pi \left( 1 + \hf \int_{-\infty}^\infty \T_1 g_1\, \dT \right) = 4\pi^2|\dmscaled_2|.
\end{equation}

\subsection{Evaluation of the integral in the expressions for $\dmscaled_2$ and $E_1$}
\label{sec:evaluation_T1g1}

We will evaluate the integral appearing in Eqs.~\eqref{eq:epsilon2_integral}, \eqref{eq:epsilon2_integral_app} for $\dmscaled_2$ and in Eq.~\eqref{eq:E1} for $E_1$.
For $n=1$, the solvability condition \eqref{eq:finitenessCondition} is equivalent to 
\begin{equation} \label{eq:finitenessCondition_even}
    \int_0^\infty g_1(\tau) H_1(\tau)\,\dtau = 0,
\end{equation}
as a result of the evenness of the integrand.
It follows that the lower limit $-\infty$ in Eq.~\eqref{eq:variation_of_constants} may be replaced with a zero lower limit;
the integral in Eq.~\eqref{eq:epsilon2_integral} is then written as
\begin{equation} \label{eq:T1g1}
    \int_{-\infty}^\infty \T_1g_1 \dT = -\hf \int_{-\infty}^\infty \dT\,g_1(T) H_1(T) \int_0^T \dtau\,g_1(\tau) H_2(\tau)
   + \hf \int_{-\infty}^\infty \dT\,g_1(T) H_2(T) \int_0^T \dtau\, g_1(\tau)H_1(\tau).
\end{equation}
The two terms on the right in Eq.~\eqref{eq:T1g1} are shown to be equal if we apply integration by parts  and use the solvability condition \eqref{eq:finitenessCondition_even}.
We have
\begin{equation} \label{eq:T1g1_short}
    \int_{-\infty}^\infty \T_1 g_1 \dT =
    \int_{-\infty}^\infty \dT\,g_1(T) H_2(T) \int_0^T \dtau\, g_1(\tau)H_1(\tau).
\end{equation}
The integral on the right can be calculated explicitly,
\begin{equation}
    \int_0^T g_1(\tau) H_1(\tau)\,\dtau = \tanh T - \frac{2}{\pi} \tanh T \sech T - \frac{4}{\pi} \arctan(e^T) + 1.
\end{equation}
This is inserted into Eq.~\eqref{eq:T1g1_short} and the integral is evaluated numerically to find
\begin{equation} \label{eq:T1g1_value}
\int_{-\infty}^\infty \T_1 g_1 \dT = -0.0790.
\end{equation}

\bigskip

\bibliographystyle{unsrt}
\bibliography{references}

\end{document}